\pdfoutput=1
\documentclass{article}
\usepackage{stsi}

\title{Spatio-Temporal Stochastic Interventions for Causal Inference\\in Climate Science}
\author{Samuel Baugh\\Department of Statistics\\Pennsylvania State University}
\date{}

\newcommand{\Kdim}{50}
\newcommand{\KdimControl}{48}
\newcommand{\KdimPr}{20}
\newcommand{\nGrid}{10{,}368}
\newcommand{\nMonthsPi}{10{,}212}
\newcommand{\bootReplicates}{500}
\newcommand{\bootBlockYears}{10}
\newcommand{\nHistMembers}{46}
\newcommand{\nSspMembers}{15}
\newcommand{\kappaPresent}{10^{5}}
\newcommand{\emergenceYear}{1995}
\newcommand{\adjustInflateGlobalPct}{47}
\newcommand{\adjClassCropRevPct}{8}
\newcommand{\cropYear}{2014}
\newcommand{\adjClassCropRevMha}{125}
\newcommand{\adjClassCropPosRevMha}{60}
\newcommand{\adjClassCropNegRevMha}{65}

\newcommand{\cropOtherN}{eleven}
\newcommand{\cropOtherMha}{580}
\newcommand{\cropOtherRevMha}{5}
\newcommand{\cropWindowsRevPct}{61}
\newcommand{\vpdLandPresent}{0.196}
\newcommand{\vpdOceanPresent}{0.004}
\newcommand{\landAreaPct}{30}
\newcommand{\landSharePresentPct}{95}
\newcommand{\vpdMeanGlobal}{5.3}
\newcommand{\vpdSdAnnualGlobal}{0.061}
\newcommand{\vpdGlobalPresentSdAnnual}{1.0}
\newcommand{\vpdLandPresentSdAnnual}{1.2}
\newcommand{\essPresentJoint}{77}
\newcommand{\ccFixedLand}{0.28}
\newcommand{\ccConstRhLand}{0.13}
\newcommand{\ccConstRhOcean}{0.015}
\newcommand{\essPresentMarginal}{100}
\newcommand{\essFirstYear}{1951}
\newcommand{\essFirstJoint}{248}
\newcommand{\snrAreaPct}{24.0}
\newcommand{\snrAreaPosPct}{17.1}
\newcommand{\snrAreaNegPct}{6.9}
\newcommand{\sigmaHistNresid}{2{,}944}
\newcommand{\sigmaHistCondMin}{273}
\newcommand{\sigmaHistCondMax}{557}
\newcommand{\sigmaHistCondPenMin}{34}
\newcommand{\sigmaHistCondPenMax}{71}
\newcommand{\sspForcedTasMaxDiff}{0.22}
\newcommand{\ghgForcingObs}{0.443}
\newcommand{\ghgRmseForcingMax}{0.041}
\newcommand{\ghgRmseForcingMain}{0.028}
\newcommand{\basisForcedHybrid}{96.7}
\newcommand{\basisForcedControl}{69.9}
\newcommand{\basisSspHybrid}{99.2}
\newcommand{\basisSspControl}{71.6}
\newcommand{\basisForcedHistorical}{97.1}
\newcommand{\basisSspHistorical}{99.2}
\newcommand{\expOneZeroSelPct}{18.3}
\newcommand{\simExpTwoBiasMargHigh}{65}
\newcommand{\simExpTwoGammaMax}{5.5\times10^{-3}}
\newcommand{\simSteps}{700}
\newcommand{\simBurnIn}{200}
\newcommand{\simT}{500}
\newcommand{\simExpOneRecords}{300}
\newcommand{\simExpTwoRecords}{100}

\hypersetup{pdftitle={Spatio-Temporal Stochastic Interventions for Causal Inference in Climate Science}, pdfauthor={Samuel Baugh}, colorlinks, citecolor=blue, urlcolor=blue, linkcolor=blue, linktocpage=true}

\begin{document}

\maketitle

\begin{abstract}
Estimating causal effects in climate science, such as the effect of anthropogenic warming on crop loss, is challenging because of complex spatio-temporal dependence and the high-dimensional nature of the treatment.
To address this dependence and the resulting poor overlap between observed and counterfactual scenarios, we develop a spatio-temporal stochastic-intervention framework for estimating causal effects from climate observations.
We introduce a regularized estimator of the stochastic-intervention treatment effect that trades a controlled bias for a reduction in the weight variance caused by poor overlap.
Simulation studies show that this estimator attains lower mean squared error than alternative weighting estimators and removes the confounding bias of an unadjusted estimator.
We apply the framework to estimate the effect of historical warming on vapor-pressure deficit, a driver of crop stress, adjusting for precipitation, which confounds the effect by affecting both temperature and humidity.
In GISS-E2-1-G climate-model simulations, the global effect is distinguishable from zero in every year from \emergenceYear{} onward, and omitting the precipitation adjustment inflates the global estimate by \adjustInflateGlobalPct\%.
Adjustment reverses the sign of the estimate over \adjClassCropRevPct\% of global cropland (\adjClassCropRevMha{} million hectares), where an unadjusted analysis could misdirect adaptation between heat-focused and moisture-focused measures.
\end{abstract}

\tableofcontents
\newpage

\section{Introduction}

\subsection{Motivation}
Vapor-pressure deficit (VPD), a principal measure of atmospheric dryness, is a leading driver of increased wildfire activity and crop stress \citep{lobell2014greater, abatzoglou2016impact, williams2019observed}. Quantifying the extent to which observed changes in VPD are directly attributable to anthropogenic warming is a causal question, and since controlled experiments on the Earth's climate are impossible, this question must be answered from a combination of observations and climate-model simulations. 

In observations and in climate simulations, the direct effect of temperature on atmospheric dryness is confounded by moisture availability (for which we use precipitation as a proxy) because increased moisture generally decreases temperatures through latent cooling while also separately decreasing VPD through increased humidity. Separating the direct causal effect of temperature from its confounded association is important for informing adaptation efforts, as moisture-driven changes can be mitigated through increased irrigation and water storage, whereas temperature-driven changes require alternative approaches such as heat-tolerant crop varieties or shifted planting calendars \citep{challinor2014meta}. Failure to adjust for confounding may therefore result in suboptimal or misdirected adaptation strategies.

Confounding adjustment is a classical problem in causal inference with well-studied methods \citep{ding2024first}. Applying these methods to climate data is complicated, however, by spatial and temporal dependence resulting in violations of the stable unit treatment value assumption (SUTVA).
If the treatment unit were taken to be the state of the atmosphere at a particular location and time, SUTVA would be violated by both temporal carryover and spatial interference; temporal carryover arises because current treatments and outcomes depend on earlier system states by virtue of their evolution in time, and spatial interference arises because an outcome at one location depends on treatments elsewhere through physical processes including advection and mixing.
One way to satisfy SUTVA with climate data is to model the treatment unit as the entire spatial field (which handles spatial interference) and then model temporal carryover through assumptions on the causal directed acyclic graph (DAG).

While modeling the treatment unit as the entire spatial field allows for SUTVA to hold, this choice introduces challenges inherent in estimating causal effects with high-dimensional treatments. This dimensionality problem is particularly severe in the climate setting: even on a relatively coarse 2.5-degree grid, the global temperature field has more than ten thousand locations.
A key challenge is satisfying the \textit{overlap} assumption (also referred to as ``positivity''), which requires that treatment values possible under the intervention have positive density under the observational process. 
Even when overlap technically holds, in high-dimensional settings it often holds only weakly (meaning that the densities are close to zero) because discrepancies between treatment values accumulate across coordinates \citep{petersen2012diagnosing, diaz2012population, diaz2021nonparametric}.

Causal inference where the treatment is a continuous global field therefore requires a framework that addresses both spatio-temporal dependence and the resulting weak overlap between intervention and observational densities.
To address this need, we develop a framework consisting of a stochastic-intervention estimand defined as a conditional shift in the distribution of the treatment field, a procedure for estimating the observational and intervention densities, and a regularized inverse-probability-weighting estimator whose tunable penalty trades a controlled bias for a reduction in the weight variance caused by poor overlap; we then apply this framework to estimate the direct effect of warming on VPD from climate-model output.

\subsection{Prior Work}

The present work builds on the causal inference literature on stochastic interventions on the methodological side, and the scientific literature on the \textit{detection and attribution} of climate change on the application side.

On the methodological side, our work extends the stochastic-intervention approach, which generalizes the potential-outcomes framework to continuous treatments by defining an intervention as a change in the distribution of treatment values. In this setting, the causal estimand is defined by the contrast between expected potential outcomes under different intervention distributions, as opposed to the contrast between two fixed treatment values.
This approach was formalized as a population intervention effect by \cite{diaz2012population} and has since been extended to represent interventions in time series and other structured settings.
\cite{kennedy2019nonparametric} and \cite{diaz2021nonparametric} study longitudinal interventions under time-varying confounding, while \cite{wu2024assessing} develop an incremental propensity intervention for multisite time series with binary treatments.

\cite{papadogeorgou2022causal} extend stochastic interventions to spatio-temporal point-process treatments for studying airstrikes and insurgent violence in Iraq.
They define interventions by modifying the intensity of the treatment point process, and address temporal confounding by using weights conditioned on treatment history.
Although the treatment is a spatial pattern, each intervention perturbs the intensity through one or two scalar parameters, which keeps the intervention distributions close to the observational distribution and facilitates overlap.
This approach does not immediately apply to continuous spatial fields, where the joint distribution also depends on a covariance structure that captures dependence across locations.

Other approaches for causal inference on spatial and environmental data have largely focused on spatial confounding and interference with discrete, threshold-based, or scalar treatments at individual locations \citep{reich2021review}.
One set of methods for handling unmeasured spatial confounding is spectral adjustment, which projects scalar treatments and outcomes onto a Fourier basis to separate the causal signal from spatially smooth confounding \citep{guan2022spectral}. Distance-adjusted propensity score matching addresses the same problem within the matching framework \citep{papadogeorgou2019adjusting}.
Interference through physical dynamics has also been modeled directly by \cite{wikle2024causal}, who represent atmospheric transport as an explicit interference mechanism in assessing the health impacts of power-plant emissions.
\cite{christiansen2022toward} develop a framework for causal inference with spatio-temporal data under latent confounding, estimating the effect of a time-varying treatment on a spatial outcome field and validating it on the effect of armed conflict on forest loss in Colombia; their treatment, however, consists of binary indicators rather than a continuous-valued field.

The approaches discussed above deal with spatio-temporal and high-dimensional data either by modeling the treatment as a scalar or by restricting the intervention a priori to a lower-dimensional space.
Other approaches exist for multidimensional treatments, although they are generally not applied to spatial treatment fields.
\cite{nabi2022semiparametric} develop causal sufficient dimension reduction for multidimensional treatments, identifying linear combinations of the treatment vector that preserve the causal relationship with the outcome.
Their approach, applied to radiation dose-volume vectors with up to twenty dimensions, models the response surface as a function of the full treatment vector; this is feasible at moderate dimensions but intractable when the treatment is a spatial field with more than ten thousand components.
\cite{jenkins2025causal} handle high-dimensional treatments in a radiotherapy setting through sparsity assumptions; such assumptions, while natural for genetic or clinical covariates, are less natural for spatially correlated treatment fields in which all locations may contribute to the causal effect.
Our approach is instead driven by the covariance structure of the treatment distribution, trading the optimality of outcome-driven reduction for computational tractability at the dimensions of global climate fields.

On the application side, our research question is most closely related to the climate science literature on detection and attribution. Detection and attribution methods aim to quantify the extent to which the response to external forcing (for example, human-driven changes in greenhouse gas or aerosol concentrations) can be \emph{detected} in the observational record, as well as the extent to which observed changes can be \emph{attributed} to that forcing \citep{hammerling2019climate, masson2021ipcc}.
The most widely used technique, optimal fingerprinting, regresses the observed pattern of a climate variable onto forced-response patterns (``fingerprints'') estimated from climate-model simulations and treats the residual as internal variability \citep{hasselmann1993optimal, hegerl1996detecting, allen2003estimating}.

Although the questions asked by detection and attribution are causal, the field developed largely independently of causal inference in statistics. Detection and attribution has since been placed in the structural causal model framework of Pearl \citep{hannart2016causal, hannart2018probabilities} and analyzed through Granger causality \citep{risser2025granger}, but formulations in the potential-outcomes framework, as used in our approach, remain uncommon. Our approach also differs in the definition of the treatment; in detection and attribution, the treatment is the external forcing itself, and simulations with and without the forcing are taken to represent potential outcomes. Because forcing alters temperature and precipitation together, such a comparison cannot separate the effect of warming on VPD from the confounding effect of precipitation. We instead take the treatment to be the climate field that results from the forcing, which further allows for the estimation of treatment effects from a single realization of the climate system. While we use climate-model output as surrogate observations in the present application, in principle our approach requires climate-model simulations only to estimate the distributions that define the stochastic interventions, as opposed to using simulations to represent the potential outcomes themselves.

\subsection{Contributions and Paper Structure}

The contributions of the present work can be divided into four areas.
First, we define a stochastic-intervention estimand for spatio-temporal systems in which the treatment is the entire spatial field, and show that this estimand is identified through an inverse-probability-weighting expression (Section~\ref{sec:assumptions} and Supplementary Proposition~\ref{prop:identification}).
Second, we propose a regularized estimator whose weights are computed from projected Gaussian propensity models and regularized by a tunable penalty term that balances the bias--variance tradeoff, provide a penalty-selection criterion that uses only the available data, and develop a block-bootstrap procedure for estimating standard errors (Section~\ref{sec:estimation}).
Third, in synthetic-data studies constructed from climate-model output with known ground truth, we show that the regularized estimator attains lower error than ordinary, truncated, and weight-capped inverse probability weighting, and that the estimator correctly adjusts for confounding bias (Section~\ref{sec:simulation}).
Fourth, we estimate the effect of historical forced temperature change on VPD, adjusted for precipitation, from GISS-E2-1-G simulations; the 95\% block-bootstrap interval for the global-mean effect excludes zero in every intervention year from \emergenceYear{} onward, omitting the precipitation adjustment inflates the global estimate by \adjustInflateGlobalPct\%, and adjustment reverses the sign of the estimate over \adjClassCropRevPct\% of global cropland, with consequences for the choice between heat-focused and moisture-focused adaptation in such regions (Section~\ref{sec:application}).

The remainder of the paper is organized as follows.
Section~\ref{sec:setting} describes the causal structure, the intervention model, and the estimand.
Section~\ref{sec:estimation} states the identifying assumptions and develops the regularized estimator, the penalty-selection rule, and a block-bootstrap procedure for uncertainty quantification.
Section~\ref{sec:simulation} presents the simulation studies, and Section~\ref{sec:application} presents the application and its implications for adaptation.
Section~\ref{sec:discussion} discusses limitations and future extensions to doubly robust estimators, single-forcing experiments, and observational reanalysis data.
The Supplementary Material contains the identification proof, the simulation-study details, and application details and diagnostics, including the cropland mask and the regional cropland breakdown.

\section{Setting and Estimand}\label{sec:setting}

\subsection{Causal Structure and Potential Outcomes}

Let $s_1, \ldots, s_n \in \mathcal{S} \subset \mathbb{R}^2$ denote $n$ spatial locations and let $T$ denote the number of time periods. At each time $t$, let $Z_t(s_i), X_t(s_i), Y_t(s_i) \in \mathbb{R}$ denote the treatment, confounder, and outcome fields, respectively, at location $s_i$. Let $\mathbf{Z}_t, \mathbf{X}_t, \mathbf{Y}_t \in \mathbb{R}^n$ denote vectors consisting of the values of the corresponding fields.
These variables are related through the causal structure illustrated in Figure~\ref{fig:dag}, where nodes represent entire spatial fields, and arrows encode causal relations between fields.

\begin{figure}[tbp]
\centering
\begin{tikzpicture}[>=stealth,
    every node/.style={circle, draw, minimum size=1.1cm, inner sep=0pt},
    every label/.style={rectangle, draw=none, minimum size=0pt, inner sep=2pt}]
  \node[label=above:{\small Confounder}] (X) {$X_t$};
  \node[below left=1.4cm and 1.1cm of X, label=below:{\small Treatment}] (Z) {$Z_t$};
  \node[below right=1.4cm and 1.1cm of X, label=below:{\small Outcome}] (Y) {$Y_t$};
  \node[left=2.6cm of X] (Xp) {$X_{t-1}$};
  \node[left=2.6cm of Z] (Zp) {$Z_{t-1}$};
  \draw[->] (X) -- (Z);
  \draw[->] (X) -- (Y);
  \draw[->] (Z) -- (Y);
  \draw[->, dashed] (Xp) -- (X);
  \draw[->, dashed] (Zp) -- (Z);
\end{tikzpicture}
\caption{Directed acyclic graph showing the causal structure.
  Within time period $t$, the variables form a classical confounding triangle, where
  the confounder $X_t$ is a common cause of both the treatment $Z_t$ and the
  outcome $Y_t$.
  Dashed arrows indicate temporal carryover from $t{-}1$ to $t$.}
\label{fig:dag}
\end{figure}

Following the potential-outcomes framework \citep{rubin1974estimating}, let $Y_t(s; \mathbf{Z})$ denote the potential outcome at location $s$ and time $t$ under treatment field $\mathbf{Z}$. In the classical causal inference setting, the estimand is defined as the contrast between potential outcomes at two fixed treatment values. Given the continuous nature of our treatment, we instead adopt the stochastic-intervention framework, where the causal estimand is defined as a contrast between two treatment distributions; these interventions are defined in the following subsection. In the absence of an intervention, the process defined by $X_t$, $Y_t$, and $Z_t$ evolves naturally according to the causal structure in Figure~\ref{fig:dag}; let $f_{\mathrm{obs}}$ denote the corresponding observational distribution of the treatment. 

For inference on the direct effect of anthropogenic warming on aridity, we take near-surface air temperature ($Z_t=\mathrm{tas}_t$) as the treatment, precipitation ($X_t=\mathrm{pr}_t$) as the confounder, vapor-pressure deficit ($Y_t=\mathrm{VPD}_t$) as the outcome, and months as the time unit $t$. We interpret VPD, defined as the difference between saturation vapor pressure at a given temperature and actual vapor pressure, as a measure of aridity because it reflects atmospheric drying relevant to crop stress and wildfire risk.
This three-variable causal structure represents the dominant physical pathways at monthly scales, where precipitation affects temperature through clouds and surface-energy exchange, temperature affects VPD through the Clausius--Clapeyron relation, and precipitation affects VPD through moisture transport and evaporation \citep{masson2021ipcc}. We note that the direction of the arrows in the causal structure depends on the temporal resolution; for the monthly time units used here, we assume that the reverse effect of temperature on precipitation is negligible after monthly averaging, and that precipitation evolves exogenously from temperature and humidity at monthly lags. 

\subsection{The Intervention Model and Estimand}\label{sec:intervention}\label{sec:one-step}

To accommodate the regularized estimator of Section~\ref{sec:propensity}, we define the intervention model and estimand in projected coordinates.
Let $\Vz = [\mathbf{v}_1, \ldots, \mathbf{v}_{\Kz}] \in \mathbb{R}^{n \times \Kz}$ denote the treatment basis, and define the projected treatment scores at time $t$ as
\begin{equation*}
\mathbf{z}_t = \Vz^\top \mathbf{Z}_t \in \mathbb{R}^{\Kz}.
\end{equation*}
We analogously project the confounder field using a basis matrix $\Vx \in \mathbb{R}^{n \times \Kx}$:
\begin{equation*}
\mathbf{x}_t = \Vx^\top \mathbf{X}_t \in \mathbb{R}^{\Kx}.
\end{equation*}
In the experiments and application, the bases are constructed from the leading spatial patterns of the treatment and confounder fields; however, the general framework is flexible and permits alternative choices.
We assume that potential outcomes depend on the treatment field only through its projected scores, so that $Y_t(s;\mathbf{z}_t)$ can be used in place of $Y_t(s;\mathbf{Z}_t)$. 
Additionally, let $\pi_{x,z}$ denote the joint distribution of the projected lagged treatment $\mathbf{z}_{t-1}$ and the projected confounder $\mathbf{x}_t$.

To represent interventions on the causal structure of Figure~\ref{fig:dag}, we model the intervention distributions as Gaussian vector autoregressions on the lagged treatment and the current confounder:
\begin{equation}\label{eq:intervention-dist}
f_{r;t^{*}}(\mathbf{z}_t \mid \mathbf{z}_{t-1}, \mathbf{x}_t) = \mathcal{N}\!\left(\mathbf{z}_t;\; \boldsymbol{\Phi}_{r} \mathbf{z}_{t-1} + \PhiTwo \mathbf{x}_t + \boldsymbol{\mu}_{r}(t^{*}),\; \boldsymbol{\Sigma}_{\varepsilon,r}\right),
\end{equation}
where $r\in\{0,1\}$ indexes the two intervention distributions that will be used in the causal contrast, $\boldsymbol{\Phi}_{r}$ represents the temporal persistence of the treatment, $\PhiTwo$ describes the effect of the measured confounder on the treatment, $\boldsymbol{\mu}_{r}(\cdot)$ is an intercept trajectory, $\boldsymbol{\Sigma}_{\varepsilon,r}$ is the residual covariance, and $t^*$ is an additional time parameter permitting temporal non-stationarity in the intervention distribution.
Each intervention distribution is parameterized by $(\boldsymbol{\Phi}_{r}, \boldsymbol{\mu}_{r}(\cdot), \boldsymbol{\Sigma}_{\varepsilon,r})$, while the confounder coefficient $\PhiTwo$ is shared by both $f_0$ and $f_1$.
Note that this definition of the intervention is conditioned on the confounder $\mathbf{x}_t$ to represent the effect of intervening on the dynamics of the process; in our application, anthropogenic influence intervenes on the growth of temperature over time conditional on the value of precipitation.

In the context of the application, $f_0$ and $f_1$ represent \textit{control} and \textit{forced} distributions; specifically, $f_0$ represents the natural climate distribution prior to anthropogenic influence, whereas $f_1$ represents anthropogenic influence driven by increasing greenhouse gas concentrations and other atmospheric pollutants, which we approximate by historical simulations that also include smaller natural influences. With this definition of $f_1$, the time index $t^*$ in Equation~\eqref{eq:intervention-dist} represents forcing at its historical level in year $t^*$, which in the application ranges from 1951 to 2025. We assume that the control intervention $f_0$ is stationary, so when referring to $f_0$ we will drop the dependence on $t^*$.

Write $\mathbf{c}_t = (\mathbf{z}_{t-1}, \mathbf{x}_t)$ for the conditioning pair of the lagged treatment and the current confounder. The mean potential outcome under intervention $r$ is
\begin{equation}\label{eq:psi-def}
    \psi_{r;t^{*}}(s) = \mathbb{E}_{\mathbf{c}_t \sim \pi_{x,z}}\Big\{\mathbb{E}_{\mathbf{z} \sim f_{r;t^{*}}(\cdot \mid \mathbf{c}_t)}\big[Y_t(s;\mathbf{z}) \mid \mathbf{c}_t\big]\Big\},
\end{equation}
where the inner expectation is over the potential outcome given $\mathbf{c}_t$ and over a treatment value $\mathbf{z}$ drawn from the intervention density independently of the potential outcomes given $\mathbf{c}_t$, and the outer expectation averages over the joint distribution $\pi_{x,z}$ of the conditioning pair.
The estimand is then defined as the contrast between the mean potential outcomes under $f_1$ and $f_0$:
\begin{equation}\label{eq:one-step-averaged}
    \tau_{t^{*}}(s) = \psi_{1;t^{*}}(s) - \psi_{0}(s),
\end{equation}
where $\psi_{0}(s)$ carries no intervention-time index because $f_0$ is assumed to be temporally stationary.

\section{Estimation}\label{sec:estimation}

Our estimator for Equation~\eqref{eq:one-step-averaged} is built on inverse probability weighting (IPW), which adjusts for confounding by weighting observations with the reciprocal of the observational treatment density \citep{ding2024first}. Extensions to doubly robust estimators, which combine IPW with outcome regression for additional theoretical guarantees, are discussed in Section~\ref{sec:discussion} and reserved for future work.

A central difficulty in applying IPW to global spatial fields is the problem of overlap, which becomes more restrictive as the treatment dimension grows \citep{damour2021overlap}; in particular, when $f_r$ and $f_{\mathrm{obs}}$ are high-dimensional, standard IPW weights will tend to collapse onto a few observations, resulting in estimators with high variance and low effective sample size (ESS; \citealp{bengtsson2008curse, snyder2008obstacles}).
We address this weight collapse through an adaptive regularization technique (Section~\ref{sec:propensity}) that controls the bias--variance tradeoff through the addition of a penalty term to the residual covariance matrices.

\subsection{Assumptions}\label{sec:assumptions}

Our estimation procedure requires four causal assumptions, which we state in terms of the projected scores defined in Section~\ref{sec:intervention}.
The first three (consistency, unconfoundedness, and positivity) are the classical identification conditions of observational causal inference \citep{rosenbaum1983central, ding2024first}.
The fourth assumption is the invariance of auxiliary mechanisms, which requires that the intervention act on the joint distribution only through the treatment distribution $f_r$; versions of this assumption are standard for stochastic interventions in general \citep{diaz2012population} and for modified treatment policies in particular \citep{haneuse2013estimation, diaz2021nonparametric}.
Throughout, $Y_t(s; \mathbf{z})$ denotes the potential outcome at location $s$ under projected treatment score $\mathbf{z}$, and $\mathcal{F}_{t-1} = \sigma\big(\{(\mathbf{z}_j, \mathbf{x}_j, \mathbf{Y}_j)\}_{j \le t-1}\big)$ is the observed history through time $t-1$.

\begin{assumption}[Consistency]\label{asm:consist}
    If $\mathbf{z}_t$ is drawn from $f_{\mathrm{obs}}$, $f_0$, or $f_{1;t^{*}}$, and $\mathbf{z}_t=\mathbf{z}$, then $Y_t(s) = Y_t(s; \mathbf{z})$.
\end{assumption}

\begin{assumption}[Unconfoundedness]\label{asm:exch}
For all $\mathbf{z}$, $s$, $t$: $Y_t(s; \mathbf{z}) \perp \mathbf{z}_t \mid \mathbf{z}_{t-1}, \mathbf{x}_t$.
\end{assumption}

\begin{assumption}[Positivity/Overlap]\label{asm:pos}
    For all $\mathbf{z}$ in the support of $f_{r;t^{*}}(\cdot \mid \mathbf{z}_{t-1}, \mathbf{x}_t)$, and all $(\mathbf{z}_{t-1}, \mathbf{x}_t)$ in the support of $\pi_{x,z}(\cdot,\cdot)$, $f_{\mathrm{obs}}(\mathbf{z} \mid \mathbf{z}_{t-1}, \mathbf{x}_t) > 0$.
\end{assumption}

\begin{assumption}[Invariance of Auxiliary Mechanisms]\label{asm:auxiliary}
    The joint distribution of the confounder, treatment, and outcome at time $t$, given the history $\mathcal{F}_{t-1}$, under each of $f\in\{f_{\mathrm{obs}},f_0,f_{1;t^{*}}\}$, factors as
\begin{equation*}
f(\mathbf{x}_t, \mathbf{z}_t, \mathbf{Y}_t \mid \mathcal{F}_{t-1}) = f_X(\mathbf{x}_t \mid \mathcal{F}_{t-1})\, f(\mathbf{z}_t \mid \mathbf{z}_{t-1}, \mathbf{x}_t)\, f_Y(\mathbf{Y}_t \mid \mathbf{z}_t, \mathbf{x}_t, \mathcal{F}_{t-1}),
\end{equation*}
where the confounder factor $f_X$ and the outcome factor $f_Y$ are identical for each $f$.
\end{assumption}

Under Assumptions~\ref{asm:consist}--\ref{asm:auxiliary}, stationarity of the observational process, and a mild integrability condition, the mean potential outcome $\psi_{r;t^{*}}(s)$ of Equation~\eqref{eq:psi-def} is identified by the inverse-probability-weighting expression
\begin{equation}\label{eq:identification}
\psi_{r;t^{*}}(s) = \mathbb{E}_{f_{\mathrm{obs}}}\!\left[\frac{f_{r;t^{*}}(\mathbf{z}_t \mid \mathbf{z}_{t-1}, \mathbf{x}_t)}{f_{\mathrm{obs}}(\mathbf{z}_t \mid \mathbf{z}_{t-1}, \mathbf{x}_t)} \cdot Y_t(s)\right],
\end{equation}
which reweights the observational sample by the ratio of the intervention to the observational density; the proof is given in Supplementary Section~\ref{app:identification}.
The causal estimand of Equation~\eqref{eq:one-step-averaged} is then identified as the contrast of Equation~\eqref{eq:identification} over $r\in\{0,1\}$. 

\subsection{Propensity Model and Estimator}\label{sec:propensity}

The observational density $f_{\mathrm{obs}}$ takes the role of the propensity model in the identification expression~\eqref{eq:identification}. To model $f_{\mathrm{obs}}$, we use a Gaussian autoregressive process analogous to the intervention distributions of Equation~\eqref{eq:intervention-dist}:
\begin{equation*}\label{eq:var1}
f_{\mathrm{obs}}(\mathbf{z}_t \mid \mathbf{z}_{t-1}, \mathbf{x}_t) = \mathcal{N}\!\left(\mathbf{z}_t;\; \boldsymbol{\Phi}_{\mathrm{obs}} \mathbf{z}_{t-1} + \PhiTwo \mathbf{x}_t + \boldsymbol{\mu}_{\mathrm{obs}},\; \boldsymbol{\Sigma}_{\varepsilon,\mathrm{obs}}\right),
\end{equation*}
with temporal persistence term $\boldsymbol{\Phi}_{\mathrm{obs}}$, confounding term $\PhiTwo$, intercept $\boldsymbol{\mu}_{\mathrm{obs}}$, and residual covariance $\boldsymbol{\Sigma}_{\varepsilon,\mathrm{obs}}$. For the intervention distribution $f_{r;t^*}$, the weight in the identification expression~\eqref{eq:identification} is the conditional density ratio
\begin{equation}\label{eq:weight-def-setting}
w_t^{(r;t^*)} = \frac{f_{r;t^{*}}(\mathbf{z}_t \mid \mathbf{z}_{t-1}, \mathbf{x}_t)}{f_{\mathrm{obs}}(\mathbf{z}_t \mid \mathbf{z}_{t-1}, \mathbf{x}_t)}.
\end{equation}

Unlike truncation-based approaches, which are sensitive to the choice of a small number of leading components \citep{allen1999checking}, we propose an adaptive regularization approach that retains all $\Kz$ projected components and adjusts the estimated residual covariance matrices by a penalty term.
For each $r \in \{0, 1\}$, we define the regularized covariance as
\begin{equation*}\label{eq:regularized-sigma}
\boldsymbol{\Sigma}_{\varepsilon,r}^{(\kappa)} = \boldsymbol{\Sigma}_{\varepsilon,r} + \mathrm{diag}\!\left(\frac{\kappa}{\lambda_1}, \ldots, \frac{\kappa}{\lambda_{\Kz}}\right)
\end{equation*}
(and analogously for $\boldsymbol{\Sigma}_{\varepsilon,\mathrm{obs}}$), where $\kappa \ge 0$ is the regularization parameter and $\lambda_k = \mathbf{v}_k^\top \operatorname{Cov}_{\mathrm{obs}}(\mathbf{Z}_t)\, \mathbf{v}_k$ is the projected observational variance of the treatment along component $k$.
When the residual covariance is diagonal in the penalty coordinates with entries $\sigma_k^2$, the penalty multiplies the precision of component $k$ by the factor $\sigma_k^2/(\sigma_k^2 + \kappa/\lambda_k)$, which most strongly reduces the precision of low-variance components, and hence their contribution to the weights, while leaving the precision of high-variance components nearly unchanged.
Since both densities in~\eqref{eq:weight-def-setting} are $\Kz$-dimensional Gaussians sharing the same projected basis $\Vz$, the regularized log-weight at each time step can then be written as
\begin{equation}\label{eq:logw-regularized-general}
\begin{split}
\log w_t^{(r;t^*)}(\kappa) = -\frac{1}{2}\bigg[&\log\frac{|\boldsymbol{\Sigma}_{\varepsilon,r}^{(\kappa)}|}{|\boldsymbol{\Sigma}_{\varepsilon,\mathrm{obs}}^{(\kappa)}|} + (\mathbf{z}_t - \mathbf{m}_t^{(r;t^*)})^\top (\boldsymbol{\Sigma}_{\varepsilon,r}^{(\kappa)})^{-1} (\mathbf{z}_t - \mathbf{m}_t^{(r;t^*)}) \\
&- (\mathbf{z}_t - \mathbf{m}_t^{(\mathrm{obs})})^\top (\boldsymbol{\Sigma}_{\varepsilon,\mathrm{obs}}^{(\kappa)})^{-1} (\mathbf{z}_t - \mathbf{m}_t^{(\mathrm{obs})})\bigg],
\end{split}
\end{equation}
where $\mathbf{m}_t^{(r;t^*)}$ and $\mathbf{m}_t^{(\mathrm{obs})}$ are the conditional means under $f_{r;t^{*}}$ and $f_\mathrm{obs}$, respectively, and we write the weights $w_t^{(r;t^*)}(\kappa)$ as a function of $\kappa$ to make this dependence explicit.
When $\kappa = 0$, \eqref{eq:logw-regularized-general} is the logarithm of the unpenalized $\Kz$-dimensional density ratio~\eqref{eq:weight-def-setting}; as $\kappa \to \infty$, the penalty dominates and the weights approach uniformity. As $\kappa$ grows, every term of~\eqref{eq:logw-regularized-general} shrinks toward zero, but the contributions of low-variance components shrink first, so the weights lose their sensitivity to those directions before the high-variance ones.

We estimate the mean potential outcome $\psi_{r;t^{*}}(s)$ through the H\'{a}jek estimator \citep{hajek1971comment} with the regularized weights:
\begin{equation*}\label{eq:hajek}
\hat{\psi}_{r;t^{*}}(s;\kappa) = \frac{\sum_{t=1}^T w_t^{(r;t^*)}(\kappa) \, Y_t(s)}{\sum_{t=1}^T w_t^{(r;t^*)}(\kappa)}.
\end{equation*}
We then define the \textit{$\kappa$-regularized estimator} for the treatment contrast as
\begin{equation}
    \widehat{\tau}_{t^{*}}(s;\kappa) = \hat{\psi}_{1;t^{*}}(s;\kappa) - \hat{\psi}_{0}(s;\kappa).\label{eq:kappa-regularized}
\end{equation}  We note that since $f_0$ is assumed to be stationary over time, the $\hat{\psi}_{0}(s;\kappa)$ term does not depend on an intervention-time index $t^*$.

\subsection{Penalty Selection}\label{sec:kappa-selection}

The penalty term reduces variance through controlling weight concentration at the cost of introducing bias, and a natural selection rule for $\kappa$ would minimize the root mean squared error (RMSE) of the estimator evaluated against a target value. The ground truth for the causal estimand of Equation~\eqref{eq:one-step-averaged} is unknown, but the selection rule can exploit the fact that the effect of the intervention on the treatment itself is known by construction.
The identification expression~\eqref{eq:identification} holds with $Y_t(s)$ replaced by any function of $\mathbf{z}_t$; taking the reconstructed treatment field $Z_t(s) = [\Vz\mathbf{z}_t](s)$ as the outcome, and using $\mathbb{E}_{f_{r;t^{*}}}[\mathbf{z}_t \mid \mathbf{z}_{t-1}, \mathbf{x}_t] = \mathbf{m}_t^{(r;t^*)}$, yields the treatment contrast
\begin{equation*}\label{eq:treatment-contrast}
\tau^{Z}_{t^{*}}(s) = \Big[\Vz\, \mathbb{E}_{(\mathbf{z}_{t-1}, \mathbf{x}_t) \sim \pi_{x,z}}\big(\mathbf{m}_t^{(1;t^*)} - \mathbf{m}_t^{(0)}\big)\Big](s).
\end{equation*}
This contrast can be estimated in two ways: directly, by averaging the fitted conditional mean shift over the observed pairs $(\mathbf{z}_{t-1}, \mathbf{x}_t)$:
\begin{equation*}
\widehat{\tau}^{Z}_{t^{*}}(s) = \frac{1}{T}\sum_{t=1}^{T}\big[\Vz\big(\mathbf{m}_t^{(1;t^*)} - \mathbf{m}_t^{(0)}\big)\big](s),
\end{equation*}
or by the $\kappa$-regularized estimator~\eqref{eq:kappa-regularized} with $Z_t(s)$ as the outcome, which we denote $\widehat{\tau}^{Z}_{t^{*}}(s;\kappa)$.
Because the direct estimator does not involve weighting or regularization, we use it as the target value for selecting the RMSE-minimizing value of $\kappa$.
Formally, let $\mathcal{T}$ denote a set of intervention times for evaluating the forced distribution, and let
\begin{equation*}
 D(\kappa;t^{*})=\sum_s a_s\big\{\widehat{\tau}^{Z}_{t^{*}}(s;\kappa) -\widehat{\tau}^{Z}_{t^{*}}(s)\big\}^2
\end{equation*}
denote the area-weighted squared error at $t^*\in \mathcal{T}$, with area weights $a_s$. Then we estimate $\kappa$ through
\begin{equation}\label{eq:mse-selection}
 \widehat\kappa=\arg\min_{\kappa\in\mathcal{K}}
 \frac{1}{|\mathcal{T}|}\sum_{t^{*}\in\mathcal{T}}
 \frac{D(\kappa;t^{*})}{\min_{\kappa'\in\mathcal{K}}D(\kappa';t^{*})},
\end{equation}
where $\mathcal{K}$ is a nonnegative grid containing $\kappa=0$.
The set $\mathcal{T}$ can be a singleton, and averaging over intervention times is unnecessary if both intervention distributions are stationary. The bias--variance tradeoff is illustrated in Figure~\ref{fig:bias-variance}, using synthetic data from the study in Section~\ref{sec:sim-comparison} with a known ground truth.

\begin{figure}[tbp]
\centering
\includegraphics[width=\textwidth]{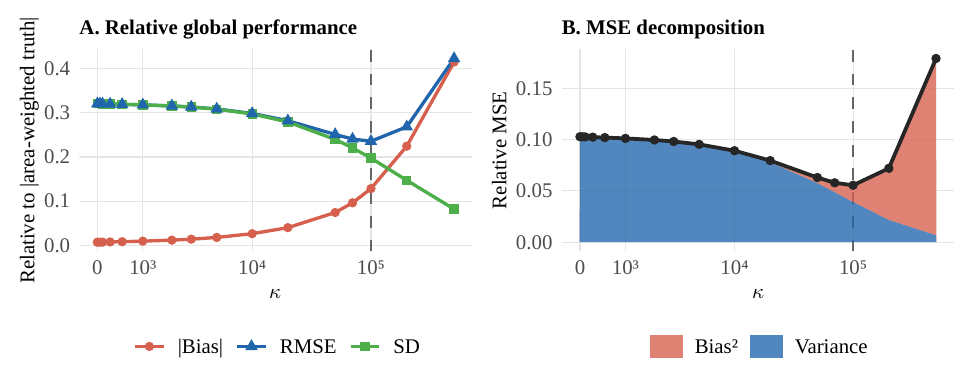}
\caption{Bias--variance tradeoff for $\widehat{\tau}(s;\kappa)$, evaluated using synthetic data with known ground truth from the experiment of Section~\ref{sec:sim-comparison}. Panel~A shows absolute bias, sampling standard deviation, and RMSE of the global estimator, divided by the absolute global truth. Panel~B shows variance and squared bias divided by the squared global truth, with their sum shown in black. The dashed vertical line marks the grid value of $\kappa$ minimizing simulated global RMSE.}
\label{fig:bias-variance}
\end{figure}

\subsection{Uncertainty Quantification}\label{sec:bootstrap}

To estimate the standard error of the estimator $\widehat{\tau}_{t^{*}}(s;\kappa)$, we use a temporal block bootstrap method.
This procedure samples whole fields, paired with their treatment, confounder, and outcome values, in temporally contiguous blocks. 
Each replicate draws $\lceil T/L \rceil$ block starting points independently and uniformly from the record and concatenates blocks of length $L$; temporal transitions that cross block boundaries are discarded. On each replicate, the observational propensity model and the directional variance parameters $\lambda_k$ are refit, and the estimator~\eqref{eq:kappa-regularized} is recomputed.
In the application, where time points are months, these blocks are taken to consist of complete years, so that $L$ is a multiple of twelve, and the estimator normalizes within calendar month and averages the monthly contrasts equally (Section~\ref{sec:app-estimation}).
For each replicate, the penalty is held at the value selected using Equation~\eqref{eq:mse-selection}.

From $B$ bootstrap replicates $\{\widehat{\tau}_{t^{*}}^{(b)}(s)\}_{b=1}^{B}$, the bootstrap standard error is then estimated by
\begin{equation*}
\widehat{\mathrm{SE}}(s) = \left[\frac{1}{B-1}\sum_{b=1}^{B}\Big(\widehat{\tau}_{t^{*}}^{(b)}(s) - \textstyle\frac{1}{B}\sum_{b'=1}^{B}\widehat{\tau}_{t^{*}}^{(b')}(s)\Big)^2\right]^{1/2}.
\end{equation*}
In addition to standard error estimates, the bootstrap distribution can be used to construct percentile intervals directly.
In the application, we say that the effect has emerged by intervention time $t^{*}$ when the 95\% percentile interval excludes zero at $t^{*}$ and at every later intervention time considered, and we call the ratio of an estimate to its bootstrap standard error its signal-to-noise ratio. 
In the application, we use $B=\bootReplicates$ replicates with blocks of \bootBlockYears{} years, so that $L=12\cdot\bootBlockYears$ months.

\section{Simulation Studies}\label{sec:simulation}

To validate the estimator against a known ground truth, we simulate data from semi-synthetic structural causal models whose statistical properties are taken from general circulation model (GCM) simulations.
Two experiments are reported below.
The first evaluates the penalty-selection rule of Equation~\eqref{eq:mse-selection} and compares the resulting estimator with truncation and weight-capping alternatives (Section~\ref{sec:sim-comparison}).
The second evaluates the ability of the estimator to correctly adjust for confounding of known strength (Section~\ref{sec:sim-confounding}).
We first describe the data used in the semi-synthetic simulations and the shared simulation framework.

\subsection{Data and Simulation Framework}\label{sec:sim-data}

To generate synthetic data with realistic climate properties, we use output from GISS-E2-1-G, a coupled atmosphere--ocean general circulation model contributed to Phase 6 of the Coupled Model Intercomparison Project (CMIP6) by the NASA Goddard Institute for Space Studies \citep{kelley2020giss}; output from this climate model is also used in the application in Section~\ref{sec:application}.
All fields are regridded to a common global $2.5^\circ \times 2.5^\circ$ latitude--longitude analysis grid from the native atmospheric grid of $2^\circ \times 2.5^\circ$, yielding $n = \nGrid$ spatial gridcells at $144 \times 72$ resolution.
For the simulation studies, we obtain the basis representation of temperature from annual means of near-surface air temperature (tas) anomalies in the pre-industrial control (piControl) scenario, an 851-year ($\nMonthsPi$-month) simulation under fixed pre-industrial greenhouse gas concentrations.

From these data, we obtain the treatment basis $\Vz$ by subtracting the long-run mean at each gridcell and computing spatial principal components by singular value decomposition of the resulting year-by-gridcell anomaly matrix. The confounder basis $\Vx$ is constructed analogously from precipitation data.
Projected scores for treatments and confounders are then generated from the following structural causal model:
\begin{equation}\label{eq:sim-scm}
\begin{aligned}
\mathbf{x}_t &= \boldsymbol\Phi_x \mathbf{x}_{t-1} + \boldsymbol\nu_t, & \boldsymbol\nu_t &\sim \mathcal{N}(\mathbf{0}, \boldsymbol\Sigma_x),\\
\mathbf{z}_t &= \boldsymbol\Phi_r \mathbf{z}_{t-1} + \PhiTwo \mathbf{x}_t + \boldsymbol\mu_r + \boldsymbol\varepsilon_t, & \boldsymbol\varepsilon_t &\sim \mathcal{N}(\mathbf{0}, \boldsymbol\Sigma_{\varepsilon,r}),\\
Y_t(s) &= g\!\left(Z_t(s),X_t(s)\right) + \epsilon_t(s), & \epsilon_t(s) &\sim \mathcal{N}(0, \sigma_Y^2),
\end{aligned}
\end{equation}
where the full treatment and confounder fields are reconstructed from the scores using the GCM-derived basis representations, $\mathbf Z_t=\Vz\mathbf z_t$ and $\mathbf X_t=\Vx\mathbf x_t$. Time series for each variable are obtained from Equation~\eqref{eq:sim-scm} by initializing at $\mathbf{x}_0=\mathbf{z}_0=\mathbf{0}$ and iterating forward in time.
For each experiment, each replicate is obtained by simulating \simSteps{} successive time points and discarding the first \simBurnIn{} to retain $T=\simT$ synthetic time steps.

\subsection{Estimator Validation and Comparison}\label{sec:sim-comparison}

To evaluate our $\kappa$-regularized estimator under the penalty-selection rule of Equation~\eqref{eq:mse-selection}, we compare it with four alternative methods: ordinary IPW, component-truncated IPW retaining the first five components, and two weight-capping methods (a square-root and an ESS-matched weight cap). For this experiment, we assume that $f_0$ and $f_1$ are known and that $f_0=f_\mathrm{obs}$. We construct $f_0$ and $f_1$ so that both their means and their variances differ across components; the parameter values and descriptions of each comparison method are given in Supplementary Section~\ref{app:sim-design}. At each time step, we reconstruct $Z_t(s)$ and generate the outcome at each location from
\begin{equation}\label{eq:sim-outcome}
 Y_t(s)=\gamma_{\mathrm{tas}}Z_t(s)+\epsilon_t(s),
 \qquad \gamma_{\mathrm{tas}}=5\times10^{-4},
 \qquad \epsilon_t(s)\sim\mathcal N\left(0,(5\times10^{-4})^2\right),
\end{equation}
where $\epsilon_t(s)$ are independent across space and time.

For each of \simExpOneRecords{} independent replicates, we compute the $\kappa$-regularized estimator and the four comparison estimators, and evaluate each against the known ground truth $\tau(s)=\gamma_{\mathrm{tas}}[\Vz\boldsymbol\delta](s)$, where $\boldsymbol\delta=\boldsymbol\mu_1-\boldsymbol\mu_0$ is the conditional mean shift between the two intervention distributions. The penalty $\kappa$ is selected from a grid whose lower boundary $\kappa=0$ corresponds to ordinary IPW; in this study, $\kappa=0$ is selected in \expOneZeroSelPct\% of replicates.

\begin{table}[tbp]
\centering
\fontsize{10}{12}\selectfont
\setlength{\tabcolsep}{3pt}
\caption{Estimator comparison in the simulation study of Section~\ref{sec:sim-comparison}, using \simExpOneRecords{} independently simulated replicates. Bias, sampling standard deviation, global RMSE, and mean spatial error are in units of $10^{-6}$. }
\label{tab:sim-comparison}
\begin{tabular*}{\textwidth}{@{\extracolsep{\fill}}lrrrrr@{}}
\hline
Method & Bias & Sampling SD & Global RMSE & Spatial error & Mean ESS \\
\hline
Ordinary IPW & 0.11 & 5.32 & 5.32 & 23.46 & 300.7 \\
Truncated to five components & \textminus{}0.52 & 4.94 & 4.97 & 23.45 & 324.0 \\
Square-root weight cap & 0.27 & 4.86 & 4.87 & 22.92 & 302.8 \\
ESS-matched weight cap & 3.20 & 3.02 & 4.40 & 18.67 & 352.0 \\
\textbf{{\boldmath $\kappa$-regularized}} & \textbf{1.50} & \textbf{2.94} & \textbf{3.30} & \textbf{18.53} & \textbf{352.0} \\
\hline
\end{tabular*}

\end{table}

The bias, sampling standard deviation, RMSE of the global estimate, mean spatial error, and weight ESS for the five estimators are reported in Table~\ref{tab:sim-comparison}. Here, spatial error is defined as the mean across replicates of the area-weighted root mean squared field error within each replicate, using cosine-latitude weights. The $\kappa$-regularized estimator gives the lowest global RMSE and mean spatial error of the five estimators. Relative to ordinary IPW, the $\kappa$-regularized estimator reduces the sampling standard deviation by over a third while incurring only a moderate bias. The $\kappa$-regularized estimator also has the highest mean effective sample size; the ESS-matched weight cap shares this ESS value by design, but with notably worse bias and worse global RMSE.

\subsection{Confounding Adjustment}\label{sec:sim-confounding}

To evaluate the ability of the $\kappa$-regularized estimator to adjust for known confounding effects at different strengths, we set the outcome in the synthetic data model of Equation~\eqref{eq:sim-scm} to be
\[
Y_t(s)=\gamma_{\mathrm{tas}}Z_t(s)+\gamma_{\mathrm{pr}}X_t(s)+\epsilon_t(s),
\]
and vary the level of the known confounding parameter $\gamma_{\mathrm{pr}}$ from zero (representing no confounding) to $\gamma_{\max}=\simExpTwoGammaMax$. Specifics on the construction of $f_1$ and $f_0$ for this experiment, and the grid of values of $\gamma_{\mathrm{pr}}$, are detailed in Supplementary Section~\ref{app:sim-confounding}.

\begin{figure}[tbp]
\centering
\includegraphics[width=\textwidth]{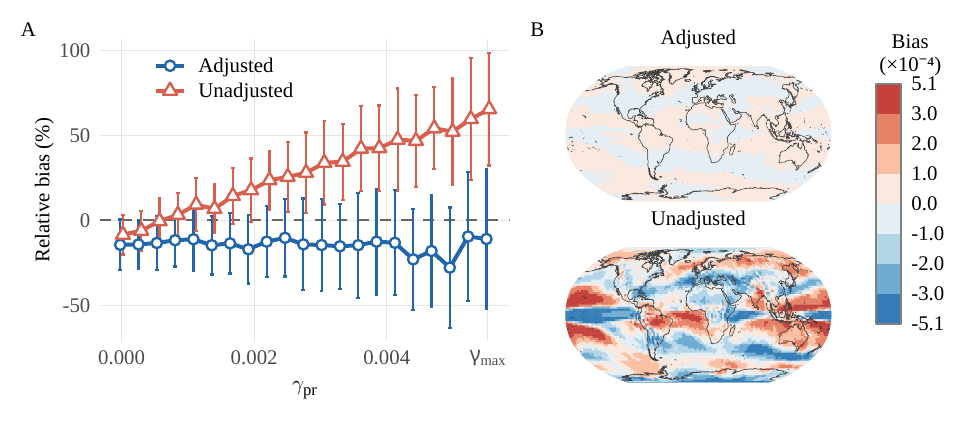}
\caption{Results of the confounding-adjustment experiment. Panel~A shows the relative bias of the adjusted and unadjusted estimators over a grid of confounding strengths $\gamma_{\mathrm{pr}}$; points show the mean over the \simExpTwoRecords{} replicates and error bars show $\pm1$ standard deviation across replicates. Panel~B shows maps of the bias of the adjusted and unadjusted estimators at $\gamma_{\mathrm{pr}}=\gamma_{\max}$, in the units of the simulated outcome.}
\label{fig:confounding}
\end{figure}

For each value of $\gamma_{\mathrm{pr}}$ in the grid, we generate \simExpTwoRecords{} independent replicates, each with its own simulated confounder, treatment, and outcome. On each replicate, we compare the $\kappa$-regularized estimator to an \textit{unadjusted} estimator whose propensity model omits the confounder. The results are displayed in Figure~\ref{fig:confounding}, where Panel~A shows the mean and spread across replicates of the relative bias of the global estimate, and Panel~B shows the bias at each gridcell for $\gamma_{\mathrm{pr}}=\gamma_{\max}$. At $\gamma_{\mathrm{pr}}=0$ (no confounding), the replicate intervals of both estimators contain zero. As $\gamma_{\mathrm{pr}}$ increases, the unadjusted relative bias increases steadily and reaches $\simExpTwoBiasMargHigh\%$ at $\gamma_{\mathrm{pr}}=\gamma_{\max}$. The replicate intervals of the adjusted estimator, in contrast, contain zero for all values of $\gamma_{\mathrm{pr}}$, although their spread increases with the strength of confounding.

Panel~B shows the spatial distribution of bias for both estimators at $\gamma_{\mathrm{pr}}=\gamma_{\max}$. The bias of the adjusted estimator is smaller in magnitude than that of the unadjusted estimator at nearly every gridcell. The bias of the unadjusted estimator is far from spatially uniform, with large negative bias over the equatorial Pacific, Antarctica, and the mid-latitude oceans of both hemispheres, most strongly near $45^\circ$S, and positive bias over Australia, the Amazon, the subtropical oceans, and the Southern Ocean near $60^\circ$S. 

\section{Application}\label{sec:application}

We apply the proposed framework to the estimation of the direct effect of anthropogenic temperature change on vapor-pressure deficit, adjusted for confounding by precipitation.
To focus on estimation of the forced component of the causal contrast, we take as observations the GISS-E2-1-G piControl simulation described in Section~\ref{sec:sim-data};
the extension of the framework to observational data from a reanalysis product is described in Section~\ref{sec:discussion} and reserved for future work.

\subsection{Data}\label{sec:app-data}

We estimate the forced intervention $f_1$ using runs from the \textit{historical} scenario of GISS-E2-1-G; this ensemble consists of \nHistMembers{} members driven by historical levels of greenhouse gas, aerosol, and natural forcing \citep{kelley2020giss, nasagiss_cmip6}. We use the 1950--2014 window (780 months per member), where 2014 is the end year of the CMIP6 historical simulations. To extend to the near present (taken to be 2025), we augment the historical forcing scenarios with projections from the intermediate Shared Socioeconomic Pathway SSP2-4.5 \citep{nasagiss_cmip6}; details on the variables used, the VPD formula, historical ensemble members, and the SSP extension are described in Supplementary Sections~\ref{app:variables}--\ref{app:ssp-pathway}. Representative samples from the deseasonalized piControl fields, alongside the ensemble-mean historical-minus-piControl temperature difference, are shown in Figure~\ref{fig:data-overview}. While the forced temperature response is concentrated at high latitudes (Panel~A), VPD anomalies (Panel~D) are largest in magnitude at low latitudes, because saturation vapor pressure, and hence VPD and its variability, is much lower in cold regions.

Since anthropogenic influence is the dominant forcing in the historical simulations, the causal contrast between $f_1$ and the pre-industrial control $f_0$ largely represents the direct effect of human-driven temperature increases on vapor-pressure deficit, adjusted for precipitation. The historical simulations also include natural forcings, such as volcanic aerosols; a more targeted analysis of the effect of greenhouse gas concentrations in isolation is possible with additional scenarios and is discussed in Section~\ref{sec:discussion}.

\begin{figure}[tbp]
\centering
\includegraphics[width=\textwidth]{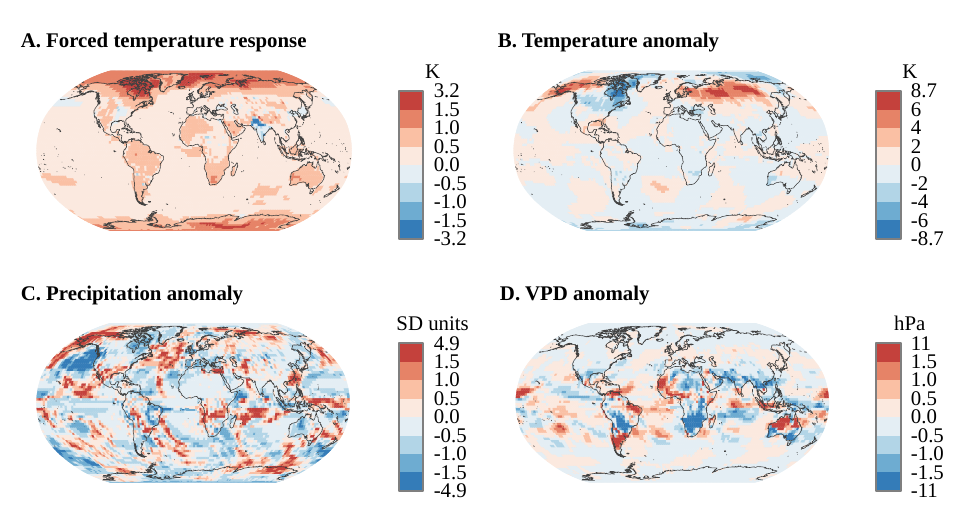}
\caption{Visualizations of the GISS-E2-1-G data used in the application. (A) the ensemble-mean historical-minus-piControl near-surface air temperature, representing the forced temperature response; (B)--(D) monthly spatial fields from the piControl simulation at month 600 (piControl year 50): (B) near-surface air temperature anomaly, (C) precipitation anomaly (in standard-deviation units), and (D) VPD anomaly, all displayed on the common $2.5^\circ$ analysis grid.}
\label{fig:data-overview}
\end{figure}

\subsection{Intervention Estimation and Validation}\label{sec:app-estimation}

To estimate $f_0$ and $f_1$, we modify the construction of Section~\ref{sec:intervention} through a hybrid basis representation and through the addition of seasonally varying intercepts and residual covariances.
The hybrid basis $\Vz$ combines the two leading forced-response patterns (the leading principal components of the annual ensemble-mean historical anomalies over 1950--2014) with the \KdimControl{} leading piControl principal components of the deseasonalized temperature anomaly field, orthogonalized against the two forced-response patterns, resulting in $\Kz=\Kdim$.
This hybrid basis construction allows the intervention to capture the leading components of the forced response while retaining much of the pre-industrial variability structure; a validation study justifying this construction is reported in Supplementary Section~\ref{app:eof-stability}.

The control intervention $f_0$ is estimated from the piControl record using the following form of Equation~\eqref{eq:intervention-dist}, in which the intercepts and residual covariances vary by calendar month:
\begin{equation*}
 f_0(\mathbf{z}_t \mid \mathbf{z}_{t-1}, \mathbf{x}_t)
 = \mathcal{N}\!\left(\mathbf{z}_t;\;
   \boldsymbol\Phi_0\mathbf{z}_{t-1}
   + \PhiTwo\mathbf{x}_t + \boldsymbol\mu_{0,m(t-1)},\;
   \boldsymbol\Sigma_{\varepsilon,0,m(t-1)}\right),
\end{equation*}
where $m(t)$ denotes the calendar month of time step $t$.
Following Section~\ref{sec:one-step}, this density does not depend on $t^{*}$, since the control distribution is assumed to be stationary, which is appropriate in this setting because the piControl run has no forced trend. The forced intervention $f_1$ is estimated from the historical simulations with the analogous parameterization
\begin{equation*}
 f_{1;t^{*}}(\mathbf{z}_t \mid \mathbf{z}_{t-1}, \mathbf{x}_t)
 = \mathcal{N}\!\left(\mathbf{z}_t;\;
   \boldsymbol\Phi_1\mathbf{z}_{t-1}
   + \PhiTwo\mathbf{x}_t + \boldsymbol\mu_{1,m(t-1)}(t^{*}),\;
   \boldsymbol\Sigma_{\varepsilon,1,m(t-1)}\right),
\end{equation*}
where $\boldsymbol\mu_{1,m}(t^{*})$ is the month-specific realization of the intervention intercept $\boldsymbol\mu_1(t^{*})$ of Section~\ref{sec:one-step}. The intervention-year component of the intercept, $\boldsymbol\mu_1(t^{*})$, is first estimated for each intervention year $t^*$ from 1951 to 2025 and subtracted from the treatment scores, from which $\boldsymbol\Phi_1$, the month-specific intercepts, and $\boldsymbol\Sigma_{\varepsilon,1,m}$ are then estimated (Supplementary Section~\ref{app:model-validation}). The confounder coefficient $\PhiTwo$ is shared by the two interventions and is estimated from the piControl record along with the other parameters of $f_0$. These specifications of $f_0$ and $f_1$ were chosen through an out-of-sample cross-validation study comparing models with shared versus seasonally varying transition matrices, intercepts, and residual covariances, the full details of which are provided in Supplementary Section~\ref{app:model-validation}. 

To account for seasonality in the treatment effect, we modify the $\kappa$-regularized estimator of Equation~\eqref{eq:kappa-regularized} to normalize the weights within each calendar month, and then average the twelve monthly contrasts with equal weight. Formally, let $\mathcal J_m$ denote the transitions from source calendar month $m$, and let $n_m=|\mathcal J_m|$; the monthly averaged estimator is then
\begin{equation*}\label{eq:hajek-monthly}
 \widehat\tau_{t^{*}}(s;\kappa)=\frac1{12}\sum_{m=1}^{12}\left\{
 \frac{\sum_{t\in\mathcal J_m}w_t^{(1;t^*)}(\kappa)Y_t(s)}{\sum_{t\in\mathcal J_m}w_t^{(1;t^*)}(\kappa)}
 -\frac1{n_m}\sum_{t\in\mathcal J_m}Y_t(s)\right\}.
\end{equation*}
The second term of the contrast takes the form of the unweighted sample mean because the observations are from the piControl simulation, in which case $f_0=f_{\mathrm{obs}}$ and the control weights equal one; in future extensions to reanalysis data as observations this term will also be weighted. 

\subsection{Results}\label{sec:app-results}

We evaluate the monthly averaged $\kappa$-regularized estimator at annual intervention times $t^{*}$ from 1951 to 2025.
The regularization parameter $\kappa$ is selected using Equation~\eqref{eq:mse-selection}, with $a_s$ the normalized cosine-latitude area weights of the gridcells and $\mathcal{T}$ the final five intervention years, 2021--2025.
The selection rule yields $\widehat\kappa=\kappaPresent$, and at this penalty the effective sample size of the weights is \essFirstJoint{} at $t^{*}=\essFirstYear$ and \essPresentJoint{} at $t^{*}=2025$, out of \nMonthsPi{} piControl months (the full ESS time series is displayed in Supplementary Section~\ref{app:ess}); note that this decrease in effective sample size with $t^*$ is expected as overlap decreases with forcing over time. To visualize the effect of the weighting, months with high and low density under the two fitted interventions are shown in Supplementary Section~\ref{app:weight-diagnostics}.

The resulting spatially averaged estimates for each intervention year $t^*$, along with 95\% intervals computed using the block bootstrap of Section~\ref{sec:bootstrap}, are displayed in Figure~\ref{fig:intervention-year}. The first year from which the global interval excludes zero at every later intervention time is \emergenceYear{}, marked with a vertical dotted line. Several visible short-term decreases in the estimated effect are consistent with volcanic eruptions included in the historical forcing scenario, as the stratospheric aerosols from a large eruption result in surface cooling, lowering the saturation vapor pressure and hence VPD \citep{robock2000volcanic, soden2002global}. Figure~\ref{fig:intervention-year} marks the years following the three major eruptions in the study period: Agung in 1963, El Chich\'{o}n in 1982, and Pinatubo in 1991 \citep{gvp2025votw}. 

In addition to the global mean, Figure~\ref{fig:intervention-year} displays the estimates averaged over land and ocean masks, with cosine-latitude weights within each region (details are given in Supplementary Section~\ref{app:land-sea-mask}). The causal effect of temperature on VPD is notably stronger over land than globally, whereas the ocean-mean estimate remains close to zero across all intervention years. This land--ocean contrast is expected because, under warming, relative humidity declines over land but remains nearly constant over the ocean \citep{byrne2018trends}.

\begin{figure}[tbp]
\centering
\includegraphics[width=0.92\textwidth]{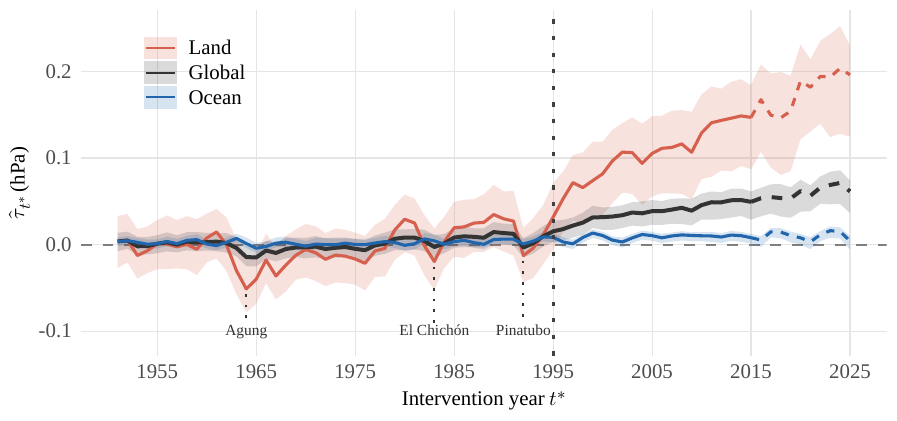}
\caption{Estimates of the causal effect of temperature on vapor-pressure deficit (hPa) averaged globally and over land and ocean masks as a function of intervention year $t^{*}$, along with 95\% block-bootstrap percentile intervals. The vertical line marks \emergenceYear{}, from which every global interval excludes zero. Intervention years after 2014 use the SSP2-4.5 extension of the historical trajectory and are drawn with dashed lines. The years following three major volcanic eruptions are annotated.}
\label{fig:intervention-year}
\end{figure}

Table~\ref{tab:application-regional} gives the global, land, and ocean estimates with 95\% intervals for the final intervention year, $t^{*}=2025$. The global and land-mask intervals exclude zero, whereas the ocean-mask interval includes zero. To put the scale of these estimates into context with respect to natural variability, the piControl global-mean VPD is \vpdMeanGlobal~hPa and the standard deviation of its annual means is \vpdSdAnnualGlobal~hPa; the global effect is therefore \vpdGlobalPresentSdAnnual{} standard deviation of the annual means, and the land effect is \vpdLandPresentSdAnnual{} standard deviations of the corresponding land-mean series.

A sense of the scale of these estimates can also be obtained through comparison to the expected response given thermodynamics alone. Vapor-pressure deficit is the difference between saturation vapor pressure, which by the Clausius--Clapeyron relation rises by about 7\% per degree of warming, and actual vapor pressure, which depends on the water content of the air (Supplementary Section~\ref{app:variables}). When the estimator's weights are applied to saturation vapor pressure, which can be computed from temperature alone, the Clausius--Clapeyron relation predicts a land response of \ccFixedLand{}~hPa when \emph{specific humidity} is held fixed and \ccConstRhLand{}~hPa when \emph{relative humidity} is held fixed. The land estimate of \vpdLandPresent{}~hPa lies between these two values, indicating that while the water content of the air over land rises with temperature, it does not rise enough to maintain relative humidity. The ocean estimate of \vpdOceanPresent{}~hPa lies below its constant-relative-humidity reference of \ccConstRhOcean{}~hPa, broadly consistent with the near-constant marine relative humidity under warming \citep{held2006robust}.

\begin{table}[tbp]
\centering
\caption{Estimates of the causal effect of temperature on vapor-pressure deficit (hPa) at $t^{*}=2025$, averaged globally and over the land and ocean masks, for the precipitation-adjusted and unadjusted estimators as well as their paired difference (adjusted minus unadjusted), along with 95\% block-bootstrap percentile intervals.}
\label{tab:application-regional}
\begin{tabular*}{\textwidth}{@{\extracolsep{\fill}}lrrrrrr@{}}
\hline
 & \multicolumn{2}{c}{Adjusted} & \multicolumn{2}{c}{Unadjusted} & \multicolumn{2}{c@{}}{Difference} \\
Region & Estimate & 95\% interval & Estimate & 95\% interval & Estimate & 95\% interval \\
\hline
Global & $0.061$ & $[0.036,\,0.074]$ & $0.090$ & $[0.064,\,0.103]$ & $-0.029$ & $[-0.052,\,-0.010]$ \\
Land & $0.196$ & $[0.125,\,0.230]$ & $0.275$ & $[0.195,\,0.313]$ & $-0.079$ & $[-0.139,\,-0.021]$ \\
Ocean & $0.004$ & $[-0.006,\,0.010]$ & $0.012$ & $[0.007,\,0.018]$ & $-0.008$ & $[-0.016,\,-0.002]$ \\
\hline
\end{tabular*}

\end{table}

Figure~\ref{fig:application-maps} shows maps of the gridcell point estimates, their bootstrap standard errors, and their signal-to-noise ratios at $t^*=2025$.
Gridcells with an absolute signal-to-noise ratio greater than two cover \snrAreaPct\% of the globe: \snrAreaPosPct\% with positive estimates and \snrAreaNegPct\% with negative estimates.  The land--ocean contrast is particularly apparent here; the estimated effects are largest in absolute value over continental equatorial regions and are generally smaller over the oceans and at high latitudes. The small effects at high latitudes can be contrasted with the forced temperature response, which is largest at high latitudes (Figure~\ref{fig:data-overview}A); the difference can be explained by the fact that VPD values are generally smaller in cold regions because saturation vapor pressure is lower at low temperatures (Figure~\ref{fig:data-overview}D). The standard errors are also much larger over continental equatorial regions (Figure~\ref{fig:application-maps}B) than at high latitudes and over the ocean.

\begin{figure}[tbp]
\centering
\includegraphics[width=\textwidth]{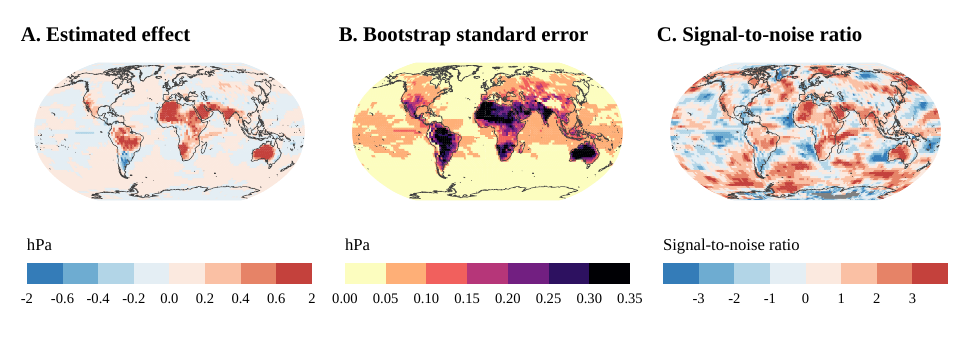}
\caption{Vapor-pressure-deficit estimates at $t^{*}=2025$. Panel~A shows point estimates (hPa), Panel~B bootstrap standard errors (hPa), and Panel~C signal-to-noise ratios, defined as the point estimate divided by the bootstrap standard error.}
\label{fig:application-maps}
\end{figure}

To assess the effect that confounding adjustment has on the estimates, we additionally apply to the data an unadjusted estimator, which uses the same weighting construction but omits precipitation from the densities $f_0$ and $f_1$. The unadjusted estimates at $t^*=2025$ are shown alongside the adjusted estimates in Table~\ref{tab:application-regional}, together with the paired bootstrap differences between the two estimators. In each region, the unadjusted estimate exceeds the adjusted version and the interval for the difference excludes zero. The reduction of the estimates under adjustment is expected given the causal structure of Figure~\ref{fig:dag}; since precipitation lowers temperature through evaporative cooling and increased cloud cover, and lowers VPD through increased near-surface humidity, low precipitation anomalies produce both higher temperatures and higher VPD. High temperatures therefore co-occur with low precipitation in the observational record, and without adjustment the association between temperature and VPD overstates the causal effect of temperature alone.  The effect of adjustment over space is displayed in Figure~\ref{fig:adjustment-comparison}A, where the generally larger effect of adjustment over land can be seen.

\begin{figure}[tbp]
\centering
\includegraphics[width=\textwidth]{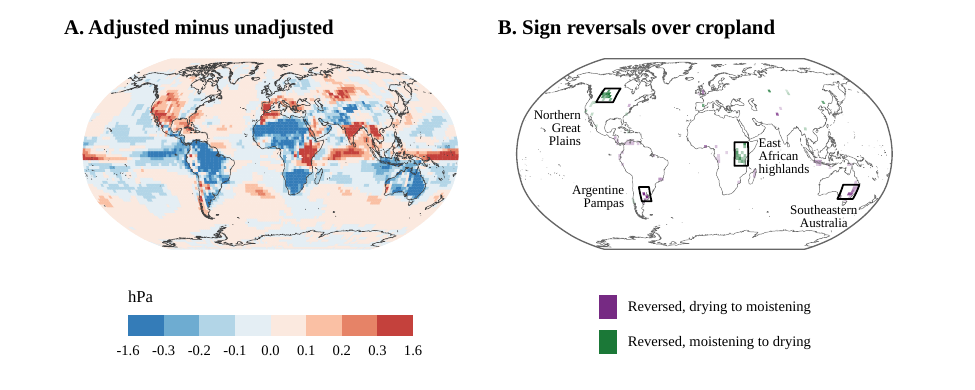}
\caption{Confounding-adjusted and unadjusted gridcell estimates of the VPD effect at $t^{*}=2025$. Panel~A shows the adjusted minus the unadjusted estimate (hPa). Panel~B shows cropland-masked gridcells on which adjustment reverses the sign of the estimate, colored by the direction of reversal, with darker shading indicating a larger cropland fraction. Boxes outline four agricultural regions that contain most of the sign-reversed cropland.}
\label{fig:adjustment-comparison}
\end{figure}

\subsection{Implications for Agricultural Adaptation}\label{sec:app-adaptation}

Because VPD is a principal driver of crop stress \citep{lobell2014greater}, the consequences of adjustment for adaptation matter most over cropland. To measure how much agricultural area is affected, we take the cropland fraction of each gridcell from the Land-Use Harmonization 2 (LUH2) land-use states for \cropYear{}, the land-use forcing of the historical simulations \citep{hurtt2020harmonization}, and weight each gridcell by its cropland area. Globally, adjustment reverses the sign of the estimate over \adjClassCropRevPct\% of cropland, amounting to \adjClassCropRevMha{} million hectares.

As the effect of adjustment is strongly regional, Figure~\ref{fig:adjustment-comparison}B highlights four agricultural regions that together hold \cropWindowsRevPct\% of the cropland where adjustment reverses the sign; enlarged views of these regions are shown in Supplementary Figure~\ref{fig:cropland-zooms}. Adjustment changes the sign of the estimate from negative to positive in the Northern Great Plains and the East African highlands (from a moistening to a drying effect), whereas over parts of the Argentine Pampas and southeastern Australia it changes the sign from positive to negative (from drying to moistening). Where the unadjusted estimate has the wrong sign, an analysis that omits precipitation may misdirect adaptation. In the Pampas and southeastern Australia, it may favor heat-focused measures, such as heat-tolerant crop varieties, when moisture-focused measures, such as irrigation and water storage, would be more effective; in the Northern Great Plains and the East African highlands, it may discount heat-focused measures where they are needed. Supplementary Section~\ref{app:cropland} gives the cropland mask, and Supplementary Table~\ref{tab:crop-regions} gives the cropland of these four regions and of \cropOtherN{} further major cropland regions by adjustment class, that is, by whether adjustment strengthens, weakens, or reverses the unadjusted estimate, or whether both estimates are negligible.

\section{Discussion and Future Work}\label{sec:discussion}

This work develops a spatio-temporal stochastic-intervention framework and regularized estimator for observational causal inference on global climate fields, and applies this framework to the estimation of the causal effect of historical forced temperature change on aridity measured by vapor-pressure deficit.
The framework is supported by an identification result (Supplementary Proposition~\ref{prop:identification}), by two simulation studies with known ground truth (Section~\ref{sec:simulation}), and by an out-of-sample validation of the intervention models (Supplementary Section~\ref{app:model-validation}). The application demonstrates the value of the framework through comparison with an estimator that does not adjust for precipitation: adjustment lowers the global, land, and ocean estimates and reverses the sign of the estimate over \adjClassCropRevPct\% of global cropland, amounting to \adjClassCropRevMha{} million hectares. As the sign of the direct effect determines whether heat-focused or moisture-focused adaptation is the relevant response, these results have the potential to improve adaptation responses in several major crop-growing regions.

Given the complexity of causal inference in climate science, the proposed approach has several methodological and scientific limitations.
The first methodological limitation is that the estimator, being based on inverse probability weighting, requires a correctly specified propensity model \citep{kang2007demystifying}. 
Future work will relax this requirement through an extension to doubly robust estimators \citep{diaz2021nonparametric}, which introduce an outcome model and require correct specification of only one of the outcome and propensity models. 
This extension is non-trivial, however, since overlap between the observational record and both the control and forced interventions must then also be addressed in the outcome model. 

Second, the framework assumes that the potential outcomes depend on the treatment only through the projected scores, that is, $Y_t(s;\mathbf{z}_t)=Y_t(s;\mathbf{Z}_t)$. While retaining a relatively large number of projected components makes this assumption more plausible, the degree to which the full-dimensional causal estimand is represented in the projected space remains an open question for further investigation.

Third, the block bootstrap resamples only the observational record, so its intervals omit uncertainty in the intervention parameters estimated from the historical ensemble, in the estimation of the penalty parameter $\kappa$, and in the estimated basis representations $\Vz$ and $\Vx$.
Future work will improve the standard errors by resampling climate-model ensemble members in addition to temporal blocks, and by investigating analytical alternatives to the bootstrap such as sandwich estimators.

On the scientific side, the first limitation is that the causal structure of Figure~\ref{fig:dag}, while physically motivated, is a simplified representation of the underlying dynamics.
In particular, it omits confounders such as soil moisture, cloud cover, and radiation; it excludes mediators and lagged feedback from temperature to precipitation; and it uses a single monthly lag, so that longer-memory causal effects and feedback processes are not represented.
Richer causal structures, including the incorporation of additional confounders, the presence of mediating effects, and the representation of feedback effects over time, will be considered in future work.

A second scientific limitation concerns the use of only historical simulations for estimating the forced intervention, and the use of a single climate model for both interventions.
Because the historical scenario includes natural as well as anthropogenic forcing, these results do not isolate the effects of increasing greenhouse gas concentrations or anthropogenic influence more generally, as can be seen from the appearance of volcanic effects in Figure~\ref{fig:intervention-year}.
These components can be isolated through the use of more targeted scenarios, such as those of the Detection and Attribution Model Intercomparison Project \citep{gillett2016detection}.
Additional climate models for both control and forced scenarios would permit an extension of the bootstrap standard errors and the quantification of climate-model uncertainty through inter-model spread.

Finally, our application takes a climate-model control simulation as the observational record, both to isolate the forced component of the causal contrast and to set aside the additional sources of uncertainty inherent to real observations.
Using reanalysis products as observations would introduce uncertainty from spatial interpolation and from varying temporal record lengths, which would be straightforward to incorporate in the framework. It would also require estimating $f_{\mathrm{obs}}$ separately from $f_0$, which poses additional challenges because the observational record is much shorter than the climate-model simulation and because $f_{\mathrm{obs}}$ is non-stationary over the historical record.
Future work will explore approaches applying this framework to reanalysis data while maintaining detection power.

\begin{acks}[Acknowledgments]
We acknowledge the World Climate Research Programme, which, through its Working Group on Coupled Modelling, coordinated and promoted CMIP6. We thank the climate modeling groups, in particular the NASA Goddard Institute for Space Studies for the GISS-E2-1-G simulations, for producing and making available their model output, the Earth System Grid Federation (ESGF) for archiving the data and providing access, and the multiple funding agencies who support CMIP6 and ESGF.
\end{acks}

\begin{funding}
The author received no specific funding for this work.
\end{funding}

\begin{acks}[Data availability]
The GISS-E2-1-G simulations used in this work are publicly available through the Earth System Grid Federation \citep{nasagiss_cmip6}: the piControl run, the historical ensemble, and the SSP1-2.6, SSP2-4.5, and SSP5-8.5 projections. The LUH2 land-use states are available through input4MIPs \citep{hurtt2017luh2}. The sources of the greenhouse gas concentrations and of the land--sea mask are given in the Supplementary Material, and the dataset identifier and version of every input file are listed with the code.
\end{acks}

\bigskip
\noindent\textbf{Code availability.}
The \texttt{stsi} R package is available at \url{https://github.com/samjbaugh/stsi}. The code that reproduces every analysis, figure, table, and reported number, with the results and the list of input files, is available at \url{https://github.com/samjbaugh/stsi_analysis}. The Supplementary Material follows the references in this document.

\bibliographystyle{apalike}  
\bibliography{references}

\clearpage
\setcounter{section}{0}
\renewcommand{\theHsection}{supp.\arabic{section}}
\setcounter{subsection}{0}
\renewcommand{\theHsubsection}{supp.\arabic{subsection}}
\setcounter{table}{0}
\renewcommand{\theHtable}{supp.\arabic{table}}
\setcounter{figure}{0}
\renewcommand{\theHfigure}{supp.\arabic{figure}}
\setcounter{equation}{0}
\renewcommand{\theHequation}{supp.\arabic{equation}}
\setcounter{theorem}{0}
\renewcommand{\theHtheorem}{supp.\arabic{theorem}}
\setcounter{lemma}{0}
\renewcommand{\theHlemma}{supp.\arabic{lemma}}
\setcounter{corollary}{0}
\renewcommand{\theHcorollary}{supp.\arabic{corollary}}
\setcounter{proposition}{0}
\renewcommand{\theHproposition}{supp.\arabic{proposition}}
\setcounter{definition}{0}
\renewcommand{\theHdefinition}{supp.\arabic{definition}}
\setcounter{assumption}{0}
\renewcommand{\theHassumption}{supp.\arabic{assumption}}
\setcounter{example}{0}
\renewcommand{\theHexample}{supp.\arabic{example}}
\setcounter{remark}{0}
\renewcommand{\theHremark}{supp.\arabic{remark}}
\renewcommand{\theHsubsection}{\theHsection.\arabic{subsection}}
\renewcommand{\thesection}{S\arabic{section}}
\renewcommand{\thesubsection}{S\arabic{section}.\arabic{subsection}}
\renewcommand{\thetable}{S\arabic{table}}
\renewcommand{\thefigure}{S\arabic{figure}}
\renewcommand{\theequation}{S\arabic{equation}}
\renewcommand{\thetheorem}{S\arabic{theorem}}
\renewcommand{\thelemma}{S\arabic{lemma}}
\renewcommand{\thecorollary}{S\arabic{corollary}}
\renewcommand{\theproposition}{S\arabic{proposition}}
\renewcommand{\thedefinition}{S\arabic{definition}}
\renewcommand{\theassumption}{S\arabic{assumption}}
\renewcommand{\theexample}{S\arabic{example}}
\renewcommand{\theremark}{S\arabic{remark}}
\section*{Supplementary Material}
\addcontentsline{toc}{section}{Supplementary Material}
The Supplementary Material contains the proof of the identification result (Section~\ref{app:identification}); the parameter values, penalty grid, comparison methods, and confounding-experiment design of the simulation studies (Section~\ref{app:simulation}); and supplementary details and diagnostics for the application (Section~\ref{app:diagnostics}): the variables and ensembles; the scenario extension to 2025; the application basis and its held-out comparison with principal-component bases; the out-of-sample validation of the propensity and intervention models; the penalty grid, the effective sample size by intervention year, and the conditioning of the residual covariances; the land and ocean masks; the cropland mask and named agricultural regions; and representative weighted months. Section, equation, figure, and assumption references without an S prefix refer to the main text.

\section{Identification}\label{app:identification}

This section proves the identification result of Section~\ref{sec:assumptions}, Equation~\eqref{eq:identification}. The argument follows results for stochastic interventions at a single time point \citep{diaz2012population, kennedy2019nonparametric}, applied at one time step conditional on the preceding treatment and the current confounder; \citet{papadogeorgou2022causal} give the corresponding result for stochastic interventions on spatio-temporal treatments.

As in Section~\ref{sec:intervention}, write $\mathbf{c}_t = (\mathbf{z}_{t-1}, \mathbf{x}_t)$ for the conditioning pair, and let $\mathbb{E}_{\mathrm{obs}}$ denote expectation under the observational process (written $\mathbb{E}_{f_{\mathrm{obs}}}$ in Equation~\eqref{eq:identification}), in which $\mathbf{z}_t$ is drawn from $f_{\mathrm{obs}}(\cdot \mid \mathbf{c}_t)$. The intervention is applied only at time $t$: the history through ${t-1}$ and the confounder $\mathbf{x}_t$ retain their observational joint distribution, and $\mathbf{z}_t$ is then drawn from $f_{r;t^{*}}(\cdot \mid \mathbf{c}_t)$ by external randomization, independently of the potential outcomes given $\mathbf{c}_t$. The potential outcomes are the same under observation and under the intervention; by consistency, the realized outcome is $Y_t(s;\mathbf{z}_t)$. The common history distribution is part of this intervention definition. Assumption~\ref{asm:auxiliary} specifies the common conditional confounder and outcome mechanisms.

All treatment densities are with respect to Lebesgue measure on $\mathbb{R}^{\Kz}$. Write $w_t^{(r;t^*)}=f_{r;t^*}(\mathbf{z}_t\mid\mathbf{c}_t)/f_{\mathrm{obs}}(\mathbf{z}_t\mid\mathbf{c}_t)$, assigning it zero wherever the denominator is zero. In addition to Assumptions~\ref{asm:consist}--\ref{asm:auxiliary}, require, for each $r$, $t^*$, $t$, and $s$,
\begin{equation}\label{eq:identification-integrability}
\mathbb{E}_{\mathrm{obs}}\!\left[w_t^{(r;t^*)}\,|Y_t(s)|\right]<\infty.
\end{equation}
This condition ensures a finite absolute outcome mean under each intervention. Conditional expectations and conditional density identities below are understood almost everywhere under the corresponding conditioning distribution.

\begin{proposition}[Identification]\label{prop:identification}
For the intervention defined above, suppose that Assumptions~\ref{asm:consist}--\ref{asm:auxiliary} and condition~\eqref{eq:identification-integrability} hold, and that the positivity condition of Assumption~\ref{asm:pos} holds for every $\mathbf{c}_t$ in the support of its observational distribution. Then, for each $s$, $t$, $t^{*}$, and $r \in \{0, 1\}$, and almost every $\mathbf{c}_t$ under the observational distribution,
\begin{equation}\label{eq:identification-conditional}
\mathbb{E}_{\mathbf{z} \sim f_{r;t^{*}}(\cdot \mid \mathbf{c}_t)}\!\left[Y_t(s;\mathbf{z}) \mid \mathbf{c}_t\right] = \mathbb{E}_{\mathrm{obs}}\!\left[\frac{f_{r;t^{*}}(\mathbf{z}_t \mid \mathbf{c}_t)}{f_{\mathrm{obs}}(\mathbf{z}_t \mid \mathbf{c}_t)}\, Y_t(s) \;\middle|\; \mathbf{c}_t\right].
\end{equation}
If, in addition, the observational process is stationary, so that $\mathbf{c}_t \sim \pi_{x,z}$ for every $t$, then the positivity requirement above reduces to Assumption~\ref{asm:pos}, averaging the left-hand side over $\pi_{x,z}$ gives the mean potential outcome $\psi_{r;t^{*}}(s)$ of Equation~\eqref{eq:psi-def}, and averaging the right-hand side gives the inverse-probability-weighting expression~\eqref{eq:identification}. The estimand~\eqref{eq:one-step-averaged} is therefore identified by the difference between expression~\eqref{eq:identification} at $r=1$ and at $r=0$.
\end{proposition}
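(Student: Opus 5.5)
The proof works at a fixed time step $t$, conditioning on the pair $\mathbf{c}_t=(\mathbf{z}_{t-1},\mathbf{x}_t)$. We start from the right-hand side of \eqref{eq:identification-conditional} and reduce it to the left-hand side.

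\textbf{Step 1: replace the observed outcome by a potential outcome.} By consistency (Assumption~\ref{asm:consist}), under the observational process $Y_t(s)=Y_t(s;\mathbf{z}_t)$. The right-hand side therefore equals $\mathbb{E}_{\mathrm{obs}}[w_t^{(r;t^*)}\,Y_t(s;\mathbf{z}_t)\mid\mathbf{c}_t]$.

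\textbf{Step 2: condition on the treatment.} We apply the tower property over $\mathbf{z}_t$ given $\mathbf{c}_t$. The weight is a function of $(\mathbf{z}_t,\mathbf{c}_t)$ only, so it factors out of the inner expectation. Define $m(\mathbf{z},\mathbf{c}_t)=\mathbb{E}[Y_t(s;\mathbf{z})\mid\mathbf{c}_t]$. Unconfoundedness (Assumption~\ref{asm:exch}) gives
\[
\mathbb{E}_{\mathrm{obs}}[Y_t(s;\mathbf{z}_t)\mid \mathbf{z}_t=\mathbf{z},\mathbf{c}_t]=m(\mathbf{z},\mathbf{c}_t).
\]
Writing the outer expectation as an integral against $f_{\mathrm{obs}}(\mathbf{z}\mid\mathbf{c}_t)\,d\mathbf{z}$, the right-hand side becomes
\[
\int_{\{f_{\mathrm{obs}}>0\}} \frac{f_{r;t^*}(\mathbf{z}\mid\mathbf{c}_t)}{f_{\mathrm{obs}}(\mathbf{z}\mid\mathbf{c}_t)}\,m(\mathbf{z},\mathbf{c}_t)\,f_{\mathrm{obs}}(\mathbf{z}\mid\mathbf{c}_t)\,d\mathbf{z}=\int_{\{f_{\mathrm{obs}}>0\}} m(\mathbf{z},\mathbf{c}_t)\,f_{r;t^*}(\mathbf{z}\mid\mathbf{c}_t)\,d\mathbf{z}.
\]

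\textbf{Step 3: remove the restriction to $\{f_{\mathrm{obs}}>0\}$.} Positivity (Assumption~\ref{asm:pos}) says the support of $f_{r;t^*}(\cdot\mid\mathbf{c}_t)$ lies inside $\{f_{\mathrm{obs}}>0\}$ for a.e.\ $\mathbf{c}_t$. The restriction can therefore be dropped, and the integral becomes $\int m(\mathbf{z},\mathbf{c}_t)f_{r;t^*}(\mathbf{z}\mid\mathbf{c}_t)\,d\mathbf{z}$.

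\textbf{Step 4: identify this with the interventional expectation.} Under the intervention, $\mathbf{c}_t$ keeps its observational law and the mechanisms $f_X$, $f_Y$ are shared (Assumption~\ref{asm:auxiliary}). The treatment is drawn from $f_{r;t^*}$ independently of the potential outcomes given $\mathbf{c}_t$. Hence $\mathbb{E}_{\mathbf{z}\sim f_{r;t^*}}[Y_t(s;\mathbf{z})\mid\mathbf{c}_t]=\int m(\mathbf{z},\mathbf{c}_t)f_{r;t^*}(\mathbf{z}\mid\mathbf{c}_t)\,d\mathbf{z}$, which is exactly what Step 3 produced. This proves \eqref{eq:identification-conditional}.

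\textbf{Main obstacle: measure-theoretic care.} The algebra above is standard; the delicate points are the following.
\begin{itemize}
\item Condition~\eqref{eq:identification-integrability} is what justifies the conditional expectations and the use of Fubini/Tonelli. Applying Tonelli first to $|Y_t(s)|$ shows that $\int |m|\,f_{r;t^*}<\infty$ for a.e.\ $\mathbf{c}_t$, so every integral above is finite.
\item The convention $w=0$ where $f_{\mathrm{obs}}=0$ interacts correctly with positivity, since that set carries no $f_{r;t^*}$-mass.
\item The realized outcome $Y_t(s;\mathbf{z}_t)$ must be handled jointly measurably in $\mathbf{z}$. Without this, substituting $\mathbf{z}_t$ into $m$ is not justified.
\end{itemize}

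\textbf{Averaging over $\mathbf{c}_t$.} Under stationarity, $\mathbf{c}_t\sim\pi_{x,z}$ for every $t$. The positivity requirement "for every $\mathbf{c}_t$ in the support of its observational distribution" is then exactly Assumption~\ref{asm:pos}. Integrating both sides of \eqref{eq:identification-conditional} against $\pi_{x,z}$ gives $\psi_{r;t^*}(s)$ from \eqref{eq:psi-def} on the left. On the right, the tower property gives the unconditional expression \eqref{eq:identification}, which is finite by \eqref{eq:identification-integrability}. Subtracting the $r=0$ expression from the $r=1$ expression identifies $\tau_{t^*}(s)$.
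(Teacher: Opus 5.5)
\textbf{Your route differs from the paper's, and your Step~2 has a genuine (repairable) gap.} You run the classical potential-outcomes derivation: consistency, tower property over $\mathbf{z}_t$, unconfoundedness, cancellation of $f_{\mathrm{obs}}$, then positivity. The paper's proof never invokes Assumption~\ref{asm:exch}. It uses the common history law and the common factors $f_X$, $f_Y$ of Assumption~\ref{asm:auxiliary} to get $P_{r;t^*}\ll P_{\mathrm{obs}}$ with $dP_{r;t^*}/dP_{\mathrm{obs}}=w_t^{(r;t^*)}$ on the observed variables. It then applies the conditional Bayes formula $\mathbb{E}_{r;t^*}[Y_t(s)\mid\mathbf{c}_t]=\mathbb{E}_{\mathrm{obs}}[w_t^{(r;t^*)}Y_t(s)\mid\mathbf{c}_t]/\mathbb{E}_{\mathrm{obs}}[w_t^{(r;t^*)}\mid\mathbf{c}_t]$, whose denominator is one. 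Potential outcomes enter only under the intervention, where $\mathbf{z}_t$ is externally randomized. This difference matters, because the gap sits exactly at the point you flagged, and you misdiagnosed it.

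\textbf{Why Step~2 fails.} Assumption~\ref{asm:exch} is pointwise in $\mathbf{z}$. For each fixed $\mathbf{z}$ it gives $\mathbb{E}[Y_t(s;\mathbf{z})\mid\mathbf{z}_t,\mathbf{c}_t]=m(\mathbf{z},\mathbf{c}_t)$ only off a null set that may depend on $\mathbf{z}$. Step~2 needs the identity on the diagonal $\{\mathbf{z}=\mathbf{z}_t\}$, which is null for the product measure when $\mathbf{z}_t$ has a density. Joint measurability does not fix this. A concrete counterexample: let $Y_t(s;\mathbf{z})=\mathbf{1}\{\mathbf{z}=\mathbf{z}_t\}$, with $\mathbf{z}_t$ the natural treatment.
\begin{itemize}
\item It is jointly measurable.
\item For each fixed $\mathbf{z}$ it vanishes almost surely, so Assumption~\ref{asm:exch} holds, $m\equiv 0$, and the left side of \eqref{eq:identification-conditional} is $0$.
\item Consistency gives $Y_t(s)=1$.
\item With Gaussian densities, positivity and \eqref{eq:identification-integrability} hold.
\end{itemize}
Yet the right side equals $\mathbb{E}_{\mathrm{obs}}[w_t^{(r;t^*)}\mid\mathbf{c}_t]=1$. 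What excludes this example is Assumption~\ref{asm:auxiliary}: under the externally randomized intervention the realized outcome is $0$ almost surely, so $f_Y$ is not common to the two regimes. You invoke that assumption only in Step~4, where it is not what is needed; Step~4 is just the definition of the left side plus Fubini.

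\textbf{How to repair it.} Use Assumption~\ref{asm:auxiliary} in Step~2.
\begin{itemize}
\item Under the intervention, independence of $\mathbf{z}_t$ from the whole potential-outcome family gives $\mathbb{E}_{r;t^*}[Y_t(s)\mid\mathbf{z}_t=\mathbf{z},\mathbf{c}_t]=m(\mathbf{z},\mathbf{c}_t)$ rigorously, via Fubini.
\item The common history law and common factors $f_X$, $f_Y$ also make $\mathbf{z}_t$ conditionally independent of the earlier history given $\mathbf{c}_t$ in both regimes. They therefore transfer this identity to $\mathbb{E}_{\mathrm{obs}}[Y_t(s)\mid\mathbf{z}_t=\mathbf{z},\mathbf{c}_t]$ for $f_{r;t^*}(\cdot\mid\mathbf{c}_t)$-almost every $\mathbf{z}$.
\item By positivity, those $\mathbf{z}$ are all that your integral sees. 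This is the paper's change of measure written out by hand.
\end{itemize}
Alternatively, strengthen Assumption~\ref{asm:exch} to conditional independence of the entire family $\{Y_t(s;\mathbf{z})\}_{\mathbf{z}}$ from $\mathbf{z}_t$ given $\mathbf{c}_t$; then your substitution does follow from Fubini. Your Tonelli bound on $\int|m|\,f_{r;t^*}$ uses the same substitution and is repaired the same way.

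The remaining steps are fine: dropping the support restriction via positivity, averaging under stationarity, and differencing over $r$. Once repaired, your version has the merit of exposing the regression function $m(\mathbf{z},\mathbf{c}_t)$, which is what a doubly robust extension would build on. The paper's route is shorter and avoids the diagonal issue entirely.
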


\begin{proof}
\noindent\emph{Step 1 (change of the joint distribution).} Let $P_{\mathrm{obs}}$ and $P_{r;t^*}$ denote the joint distributions of the observed history through ${t-1}$ and $(\mathbf{x}_t,\mathbf{z}_t,\mathbf{Y}_t)$ under observation and under the intervention, respectively. The history distribution is common by construction. In the conditional factorization of Assumption~\ref{asm:auxiliary}, the factors $f_X$ and $f_Y$ are also common, so only the treatment factor changes. Positivity (Assumption~\ref{asm:pos}) therefore gives $P_{r;t^*}\ll P_{\mathrm{obs}}$ and
\begin{equation*}
\frac{dP_{r;t^*}}{dP_{\mathrm{obs}}}
=\frac{f_{r;t^*}(\mathbf{z}_t\mid\mathbf{c}_t)}{f_{\mathrm{obs}}(\mathbf{z}_t\mid\mathbf{c}_t)}
=w_t^{(r;t^*)},
\qquad P_{\mathrm{obs}}\text{-almost surely}.
\end{equation*}
Let $\mathbb{E}_{r;t^*}$ denote expectation under $P_{r;t^*}$. Consequently, by~\eqref{eq:identification-integrability},
\begin{equation*}
\mathbb{E}_{r;t^*}[|Y_t(s)|]=\mathbb{E}_{\mathrm{obs}}[w_t^{(r;t^*)}|Y_t(s)|]<\infty.
\end{equation*}

\noindent\emph{Step 2 (conditioning on $\mathbf{c}_t$).} Positivity and normalization of the intervention density imply
\begin{equation*}
\mathbb{E}_{\mathrm{obs}}\!\left[w_t^{(r;t^*)}\mid\mathbf{c}_t\right]
=\int f_{r;t^*}(\mathbf{z}\mid\mathbf{c}_t)\,d\mathbf{z}=1
\end{equation*}
almost everywhere. Under $P_{r;t^*}$ the treatment $\mathbf{z}_t$ is drawn from $f_{r;t^*}(\cdot\mid\mathbf{c}_t)$ independently of the potential outcomes given $\mathbf{c}_t$, and by consistency (Assumption~\ref{asm:consist}) the realized outcome is $Y_t(s;\mathbf{z}_t)$, so
\begin{equation*}
\mathbb{E}_{\mathbf{z} \sim f_{r;t^{*}}(\cdot \mid \mathbf{c}_t)}\!\left[Y_t(s;\mathbf{z})\mid\mathbf{c}_t\right]
=\mathbb{E}_{r;t^*}\!\left[Y_t(s)\mid\mathbf{c}_t\right].
\end{equation*}
Conditional change of measure then gives
\begin{equation*}
\mathbb{E}_{r;t^*}\!\left[Y_t(s)\mid\mathbf{c}_t\right]
=\frac{\mathbb{E}_{\mathrm{obs}}[w_t^{(r;t^*)}Y_t(s)\mid\mathbf{c}_t]}{\mathbb{E}_{\mathrm{obs}}[w_t^{(r;t^*)}\mid\mathbf{c}_t]}
=\mathbb{E}_{\mathrm{obs}}[w_t^{(r;t^*)}Y_t(s)\mid\mathbf{c}_t],
\end{equation*}
which is~\eqref{eq:identification-conditional}. 

\noindent\emph{Step 3 (averaging over $\mathbf{c}_t$).} Under stationarity the observational distribution of $\mathbf{c}_t$ is $\pi_{x,z}$ for every $t$, so averaging~\eqref{eq:identification-conditional} over $\mathbf{c}_t \sim \pi_{x,z}$ gives $\psi_{r;t^{*}}(s)$ of Equation~\eqref{eq:psi-def} on the left and, by iterated expectations, the expression~\eqref{eq:identification} on the right. Taking the difference between the identities for $r = 1$ and $r = 0$ identifies $\tau_{t^{*}}(s)$; the control term $\psi_0(s)$ does not depend on $t^{*}$ because $f_0$ does not.
\end{proof}

The conditional identity~\eqref{eq:identification-conditional} does not require stationarity; stationarity is used only in Step~3, to equate the observational distribution of the conditioning pair at each observation time with the averaging distribution $\pi_{x,z}$ of~\eqref{eq:psi-def}. The outcome factor $f_Y$ of Assumption~\ref{asm:auxiliary} may depend on the full history $\mathcal{F}_{t-1}$; this factor cancels in Step~1 and therefore does not enter the weights. When $f_0 = f_{\mathrm{obs}}$, the control density ratio is one almost surely and $\psi_0(s) = \mathbb{E}_{\mathrm{obs}}[Y_t(s)]$.

\section{Simulation Study Details}\label{app:simulation}

\subsection{Estimator Comparison Experiment}\label{app:sim-design}

This section gives the intervention distributions, the penalty grid, and the comparison methods of the estimator comparison of Section~\ref{sec:sim-comparison}.

\noindent\textbf{Intervention distributions and penalty grid.} The treatment is represented by $\Kz=20$ components, and $\lambda_k$ denotes the variance of the $k$th principal component of the annual piControl temperature anomalies (Section~\ref{sec:sim-data}); in the simulations, these eigenvalues are used as the penalty scales $\lambda_k$ of Section~\ref{sec:propensity}. Both intervention distributions have $\boldsymbol\Phi_0=\boldsymbol\Phi_1=0.3\mathbf I_{20}$ and $\PhiTwo=\mathbf0$. The control distribution, which is also the observational distribution, has $\boldsymbol\mu_0=\mathbf0$ and $\boldsymbol\Sigma_{\varepsilon,0}=\operatorname{diag}(\lambda_1,\ldots,\lambda_{20})$. The forced distribution has
\begin{gather*}
\boldsymbol\mu_1=0.7\,(-10,-8,-6,4,-3,-2,-2,0.5,2,0.1,0,\ldots,0)^\top,\\
\boldsymbol\Sigma_{\varepsilon,1}=\operatorname{diag}(\rho_1\lambda_1,\ldots,\rho_{20}\lambda_{20}),
\end{gather*}
where $(\rho_1,\ldots,\rho_{10})=(1.5,1.3,1.3,1.1,1.1,1.2,0.8,1,0.9,1)$ and $\rho_k=1$ for $k>10$, so that the two distributions differ in both their means and their variances.
The penalty $\kappa$ is selected by Equation~\eqref{eq:mse-selection} from the grid
\begin{equation*}
\mathcal{K}=\{0\} \cup 50\cdot\{1,2,5,10\} \cup 10^3\cdot\{1,2,3,5\} \cup 10^4\cdot\{1,2,5,7\} \cup 10^5\cdot\{1,2,5\},
\end{equation*}
whose lower boundary $\kappa=0$ corresponds to ordinary IPW.

\noindent\textbf{Comparison methods.} For the component-truncated IPW comparison method, we compute the density ratio retaining only the first five components. We also compare two ways of capping the ordinary IPW weights. For a threshold $c$, a weight-capping method replaces each unnormalized ordinary weight $w_t$ by $\min(w_t,c)$ and then normalizes these modified weights to sum to one. The first capping method uses a square-root weight cap with threshold $c=\bar w\sqrt{499}$, where $\bar w$ is the mean unnormalized weight and 499 is the number of retained transitions; this comparison is inspired by truncated importance sampling \citep{ionides2008truncated}.
The second capping method is an ESS-matched weight cap that chooses $c$ separately in each replicate to match the ESS of the $\kappa$-regularized estimator in that replicate; if that ESS does not exceed the ESS of ordinary IPW, we set $c=\infty$ so that the weights are unchanged.

\subsection{Confounding Adjustment Experiment}\label{app:sim-confounding}
For the confounding adjustment experiment of Section~\ref{sec:sim-confounding}, we take $\Kx=20$ in Equation~\eqref{eq:sim-scm} and take the spatial basis $\Vx$ to be the first 20 precipitation anomaly principal components described in Section~\ref{sec:sim-data}, with eigenvalues $\lambda^{\mathrm{pr}}_1,\ldots,\lambda^{\mathrm{pr}}_{20}$. We set $\boldsymbol\Phi_x=0.3\mathbf I_{20}$ and $\boldsymbol\Sigma_x=(1-0.3^2)\operatorname{diag}(\lambda^{\mathrm{pr}}_1,\ldots,\lambda^{\mathrm{pr}}_{20})$, so that the stationary variances of the confounder scores equal the GCM-derived eigenvalues. In all replicates, we set $\PhiTwo$ to be the least-squares coefficient matrix obtained by regressing annual piControl temperature scores on precipitation scores of the same year. The remaining parameters of $f_0$ and $f_1$ are as in Section~\ref{app:sim-design}.
The confounding strength $\gamma_{\mathrm{pr}}$ takes 21 equally spaced values,
\begin{equation*}
\gamma_{\mathrm{pr}}\in\{0,\gamma_{\max}/20,2\gamma_{\max}/20,\ldots,\gamma_{\max}\},
\qquad \gamma_{\max}=\simExpTwoGammaMax,
\end{equation*}
and $\gamma_{\mathrm{tas}}$ and the outcome noise are as in Equation~\eqref{eq:sim-outcome}.

\section{Application Details and Diagnostics}\label{app:diagnostics}

This section collects the supplementary material for the application of Section~\ref{sec:application}: the variables, observational record, and historical ensemble (Section~\ref{app:variables}); the extension of the intervention trajectory to 2025 (Section~\ref{app:ssp-pathway}); the application basis (Section~\ref{app:eof-stability}); the validation of the propensity and intervention models (Section~\ref{app:model-validation}); the penalty grid, together with the effective sample size and conditioning at the selected penalty $\kappa=\kappaPresent$ (Section~\ref{app:ess}); the land and ocean masks (Section~\ref{app:land-sea-mask}); the cropland mask and named agricultural regions (Section~\ref{app:cropland}); and representative weighted months (Section~\ref{app:weight-diagnostics}).

\subsection{Variables, Observational Record, and Historical Ensemble}\label{app:variables}

The observational record used in the application consists of the monthly-mean fields of the 851-year (\nMonthsPi-month) GISS-E2-1-G piControl run \texttt{r1i1p1f1} described in Section~\ref{sec:sim-data}. The treatment is near-surface air temperature (CMIP6 variable \texttt{tas}), the confounder is precipitation (\texttt{pr}), and the outcome, vapor-pressure deficit, is computed from \texttt{tas} and near-surface specific humidity (\texttt{huss}). 

Vapor-pressure deficit (hPa) at gridcell $s$ is
\begin{gather*}
\mathrm{VPD}(s) = \max\{e_s(T(s)) - e_a(s),\, 0\},\\
e_s(T) = 6.112\exp\!\left(\frac{17.67\,T}{T + 243.5}\right),
\qquad
e_a(s) = \frac{q(s)\,p}{0.622 + 0.378\,q(s)},
\end{gather*}
where $T(s)$ is \texttt{tas} in degrees Celsius, $e_s(T)$ is the saturation vapor pressure at temperature $T$, $e_a(s)$ is the actual vapor pressure derived from the specific humidity $q(s)$ given by \texttt{huss}, and the surface pressure is held constant at $p = 1013.25$~hPa. We use raw VPD as the outcome and monthly anomalies for visualization.

The \nHistMembers{} historical ensemble members used to estimate $f_1$ (Section~\ref{sec:app-data}) are pooled over five physics-and-forcing variant groups, denoted by variant IDs \texttt{p1f1}, \texttt{p1f2}, \texttt{p1f3}, \texttt{p3f1}, and \texttt{p5f1}, where the \texttt{p} index distinguishes treatments of atmospheric composition and aerosols, and the \texttt{f} index distinguishes prescribed forcing datasets.

\subsection{Extension to 2025 through SSP Projections}\label{app:ssp-pathway}

The Shared Socioeconomic Pathway (SSP) projections are simulations driven by prescribed future greenhouse gas concentrations \citep{meinshausen2020shared} and aerosol emissions, available for several pathways, including SSP1-2.6 (strong mitigation), SSP2-4.5 (intermediate), and SSP5-8.5 (high emissions).
To extend the record from 2015 to 2025 (Section~\ref{sec:app-data}), we use the intermediate SSP2-4.5 scenario; over this short horizon the three pathways give similar estimates (Table~\ref{tab:ghg-similarity} and Figure~\ref{fig:ssp-pathway}).
The \nSspMembers{} SSP2-4.5 scenario members are used to extend the intercept trajectory $\boldsymbol{\mu}_1(t^*)$ of Equation~\eqref{eq:intervention-dist} over intervention times $t^*$ from 2015 to 2025; we assume that the other parameters of $f_1$ are well represented by the historical simulations, and these are not re-estimated.

Because only the physics-and-forcing variants \texttt{p1f2}, \texttt{p3f1}, and \texttt{p5f1} of GISS-E2-1-G (Section~\ref{app:variables}) are present in both the historical and the scenario ensembles, we match each of the \nSspMembers{} SSP2-4.5 members to the corresponding historical variant, compute for each variant the difference between its mean scenario trajectory and its historical 2010--2014 mean, average these differences over the three variants with equal weight, and add the result to the \nHistMembers{}-member historical mean over 2010--2014. Measuring the scenario change against the matched historical mean, rather than against the full-ensemble mean, avoids a discontinuity in the trajectory at the transition from 2014 to 2015.

Over 2015--2025, the three pathways are similar in their greenhouse gas concentrations, and all remain close to the observed concentrations. Table~\ref{tab:ghg-similarity} reports, for SSP1-2.6, SSP2-4.5, and SSP5-8.5, the root-mean-square difference over 2015--2025 between the pathway's changes in CO$_2$, CH$_4$, and N$_2$O concentrations since 2014 and the observed changes in the NOAA global annual means \citep{noaa_gml_trends}. The pathway concentrations are the CMIP6 global means of \citet{meinshausen2020shared}, as distributed by \citet{nicholls2020reduced}; measuring both series as changes from their own 2014 values removes the constant offset between the two global-mean definitions. The table also combines the three gases into a change in radiative forcing with the simplified expressions of \citet{myhre1998new}. Relative to the observed forcing change of \ghgForcingObs~W\,m$^{-2}$ over 2014--2025, every pathway differs by at most \ghgRmseForcingMax~W\,m$^{-2}$ in root-mean-square terms, and SSP2-4.5 by \ghgRmseForcingMain~W\,m$^{-2}$.

Figure~\ref{fig:ssp-pathway} repeats the global estimate over 2014--2025 with the SSP1-2.6 and SSP5-8.5 extensions in place of SSP2-4.5, constructed in the same way from the same three variants and retaining the SSP2-4.5 penalty and the same bootstrap resamples. Over this period, the forced global-mean temperature of the three trajectories differs by at most \sspForcedTasMaxDiff~K. The three percentile intervals have a common intersection in every year, although the point estimates vary slightly across pathways.

\begin{table}[tbp]
\centering
\fontsize{10}{12}\selectfont
\setlength{\tabcolsep}{3pt}
\caption{Similarity of the SSP concentration pathways to observed greenhouse gas concentrations over 2015--2025. The four root-mean-square (RMS) columns give the RMS difference between each pathway's change since 2014 and the observed change, for each gas and for the combined forcing computed with the simplified expressions of \citet{myhre1998new}. The last column gives the forcing change from 2014 to 2025, with the observed value in the final row. Observations are NOAA Global Monitoring Laboratory global annual means, and pathway concentrations are the CMIP6 global means of \citet{meinshausen2020shared}.}
\label{tab:ghg-similarity}
\begin{tabular*}{\textwidth}{@{\extracolsep{\fill}}lrrrrr@{}}
\hline
 & \multicolumn{4}{c}{RMS difference from observed change, 2015--2025} & Forcing change \\
Pathway & CO$_2$ (ppm) & CH$_4$ (ppb) & N$_2$O (ppb) & Forcing (W\,m$^{-2}$) & 2014--2025 (W\,m$^{-2}$) \\
\hline
SSP1-2.6 & 1.7 & 37.4 & 2.0 & 0.013 & 0.426 \\
SSP2-4.5 & 2.2 & 16.0 & 1.5 & 0.028 & 0.480 \\
SSP5-8.5 & 3.4 & 12.6 & 1.3 & 0.041 & 0.514 \\
Observed & & & & & 0.443 \\
\hline
\end{tabular*}

\end{table}

\begin{figure}[tbp]
\centering
\includegraphics[width=0.7\textwidth]{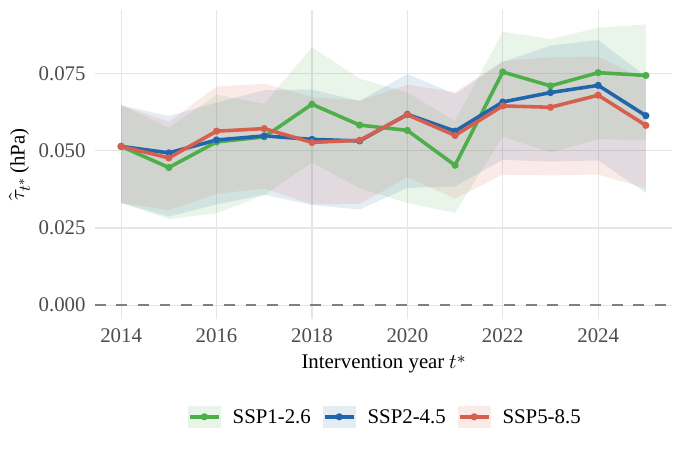}
\caption{Estimates of the global-mean effect on VPD (hPa) and 95\% block-bootstrap percentile intervals over intervention years 2014--2025 under the SSP1-2.6, SSP2-4.5, and SSP5-8.5 extensions of the historical trajectory, with $\kappa=\kappaPresent$.}
\label{fig:ssp-pathway}
\end{figure}

\subsection{Application Basis}\label{app:eof-stability}

The application basis, the hybrid basis of Section~\ref{sec:app-estimation}, combines the two forced-response patterns (the leading principal components of the annual ensemble-mean historical temperature anomalies) with orthogonalized piControl principal components, and is shared by $f_{\mathrm{obs}}$, $f_0$, and $f_1$. Figure~\ref{fig:application-basis} shows its first six vectors. The confounder basis $\Vx$ consists of the leading \KdimPr{} principal components of the deseasonalized piControl precipitation anomaly field, after standardization at each gridcell.

\begin{figure}[tbp]
\centering
\includegraphics[width=\textwidth]{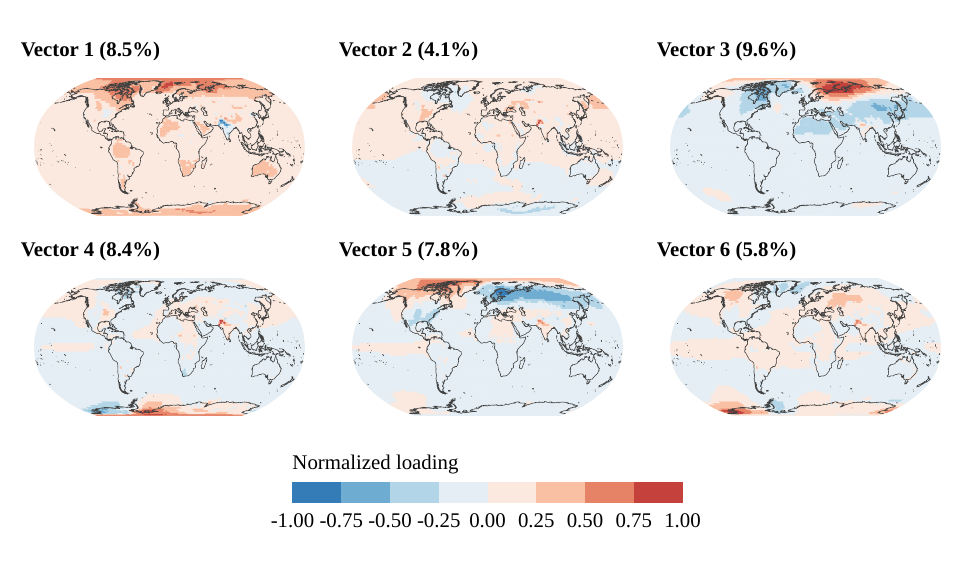}
\caption{The first six vectors of the application basis. Vectors 1 and 2 are the two forced-response patterns, and vectors 3--6 are the leading piControl principal components after orthogonalization. Each vector is divided by its largest absolute loading, and the percentage in each title is the share of piControl score variance along that vector.}
\label{fig:application-basis}
\end{figure}

We compare the application basis with two alternatives of the same dimension: the control basis, consisting of the leading \Kdim{} principal components of the piControl anomalies, and the historical basis, consisting of the leading \Kdim{} principal components of the historical anomalies. Using the five folds of historical members of Section~\ref{app:model-validation}, we construct the hybrid basis from the members outside each fold, with the two forced-response patterns computed from their annual ensemble mean, and the historical basis from the stacked monthly anomalies of the same members, relative to the piControl climatology. We then measure the share of the area-weighted squared norm of held-out anomaly fields that lies in the span of each basis. Table~\ref{tab:basis-validation} reports four comparisons: the annual ensemble-mean anomaly of the held-out members over 1950--2014, which estimates the forced response; its warming pattern, the 2005--2014 mean minus the 1950--1959 mean; the monthly anomalies of the individual held-out members; and, as an out-of-period target, the annual ensemble-mean SSP2-4.5 anomaly over 2015--2025, with all bases constructed from all historical members. The hybrid basis captures \basisForcedHybrid\% of the held-out forced response, compared with \basisForcedControl\% for the control principal components and \basisForcedHistorical\% for the historical principal components, and \basisSspHybrid\% of the scenario anomaly, compared with \basisSspControl\% and \basisSspHistorical\%. Replacing two control components with the two forced-response patterns therefore recovers nearly all of the advantage of a basis estimated from the historical ensemble, while the remaining directions describe the piControl variability used in the observational model.

\begin{table}[tbp]
\centering
\fontsize{10}{12}\selectfont
\setlength{\tabcolsep}{3pt}
\caption{Percentage of the area-weighted squared norm of held-out temperature anomaly fields captured by the span of each \Kdim-dimensional basis. The first three columns pool five folds of held-out historical members (forced-response patterns and historical principal components estimated without the held-out members); the last column uses the annual ensemble-mean SSP2-4.5 anomaly over 2015--2025. All anomalies are relative to the piControl monthly climatology. The basis used in the application is shown in bold.}
\label{tab:basis-validation}
\begin{tabular*}{\textwidth}{@{\extracolsep{\fill}}lrrrr@{}}
\hline
 & \multicolumn{3}{c}{Held-out historical members} & SSP2-4.5 \\
Basis & Forced response & Warming pattern & Monthly anomalies & 2015--2025 \\
\hline
\textbf{Hybrid (application)} & \textbf{96.7} & \textbf{99.0} & \textbf{78.2} & \textbf{99.2} \\
Control basis & 69.9 & 72.9 & 72.5 & 71.6 \\
Historical basis & 97.1 & 97.6 & 79.4 & 99.2 \\
\hline
\end{tabular*}

\end{table}

\subsection{Propensity and Intervention Model Validation}\label{app:model-validation}

We assess the specifications of the control intervention $f_0$ and the forced intervention $f_1$ by out-of-sample cross-validation, holding the spatial bases and preprocessing fixed. Both models are compared across the same five specifications, which differ in whether the transition matrix, the intercept, and the residual covariance are shared across calendar months or month-specific. For $f_1$, we add a sixth specification, in which the transition matrix is fixed at the control estimate $\widehat{\boldsymbol\Phi}_0$, so that the intervention changes only the conditional mean level and the residual covariance. Table~\ref{tab:model-validation} reports the mean held-out Gaussian log density per monthly transition of the \Kdim-dimensional score vector for each specification. Because the observational record is the piControl run, $f_0=f_{\mathrm{obs}}$, and the model for $f_0$ is also the propensity model.

\noindent\textbf{Control intervention.} The model for $f_0$ is assessed by cross-validation over the 851 piControl years with ten interleaved folds: year $y$ belongs to fold ${((y-1)\bmod 10)+1}$, so that the folds do not overlap and each contains every tenth year, and each transition belongs to the fold of its target year. Each fold is predicted from a model fitted to the other nine, with the lagged treatment scores and the precipitation scores as predictors.

\noindent\textbf{Forced intervention.} The intercept trajectory $\boldsymbol\mu_1(t^{*})$ of Equation~\eqref{eq:intervention-dist} is estimated as the annual mean of the ensemble-mean treatment scores in year $t^{*}$, centered over 1950--2014. The transition matrix $\boldsymbol\Phi_1$ and the month-specific intercepts $\boldsymbol\mu_{1,m}$ are then estimated by least-squares regression of the pooled member scores, with this trajectory subtracted, on the lagged treatment scores, so that $\boldsymbol\mu_{1,m}(t^{*}) = \boldsymbol\mu_{1,m} + \boldsymbol\mu_1(t^{*})$, and each residual covariance $\boldsymbol\Sigma_{\varepsilon,1,m}$ is the sample covariance of the regression residuals in calendar month $m$. The model for $f_1$ is assessed by leave-members-out cross-validation over the \nHistMembers{} historical members, which are divided into five folds. For each fold, the forced trajectory and the regression are fitted to the members outside the fold, and the monthly transitions of the held-out members are scored. 

Among the specifications with a shared transition matrix, month-specific residual covariances account for nearly all of the improvement over the specification with no seasonal terms, while month-specific intercepts alone give little or no improvement. The specification with month-specific transition matrices (a periodic VAR) attains the highest held-out log density for both models, but at the cost of multiplying the number of persistence parameters twelvefold, from $\Kdim^2$ to $12\cdot\Kdim^2$. The application therefore uses a shared transition matrix with month-specific intercepts and residual covariances for both $f_0$ and $f_1$. For $f_1$, fixing the transition matrix at $\widehat{\boldsymbol\Phi}_0$ lowers the held-out log density relative to estimating a shared $\boldsymbol\Phi_1$ from the historical ensemble.

\begin{table}[tbp]
\centering
\fontsize{10}{12}\selectfont
\setlength{\tabcolsep}{3pt}
\caption{Mean held-out Gaussian log density per monthly transition for the control intervention $f_0$ and the forced intervention $f_1$. The rows differ in whether the transition matrix, intercept, and residual covariance are shared across calendar months (shared) or month-specific (monthly); higher values indicate better out-of-sample fit. The specification used in the application is shown in bold. Fixing the transition matrix at the control estimate $\widehat{\boldsymbol\Phi}_0$ applies only to $f_1$.}
\label{tab:model-validation}
\begin{tabular*}{\textwidth}{@{\extracolsep{\fill}}lllrr@{}}
\hline
\multicolumn{3}{@{}l}{Specification} & \multicolumn{2}{r@{}}{Held-out log density} \\
Transition matrix & Intercept & Residual covariance & $f_0$ & $f_1$ \\
\hline
shared & shared & shared & $-200.76$ & $-209.26$ \\
shared & monthly & shared & $-200.82$ & $-209.03$ \\
shared & shared & monthly & $-197.05$ & $-202.86$ \\
\textbf{shared} & \textbf{monthly} & \textbf{monthly} & {\boldmath$-197.11$} & {\boldmath$-202.60$} \\
monthly & monthly & monthly & $-193.62$ & $-201.23$ \\
$\widehat{\boldsymbol\Phi}_0$ & monthly & monthly & --- & $-203.32$ \\
\hline
\end{tabular*}

\end{table}

\subsection{Penalty Grid, Effective Sample Size, and Conditioning}\label{app:ess}

In the application, the penalty $\kappa$ is selected by Equation~\eqref{eq:mse-selection} from the grid
\begin{equation*}
\mathcal{K}=\{0\} \cup 10^4\cdot\{1,2,3,5,7\} \cup 10^5\cdot\{1,2,5\},
\end{equation*}
whose lower boundary $\kappa=0$ corresponds to ordinary IPW. The selected value $\widehat\kappa=\kappaPresent$ is used for the diagnostics in the remainder of this section and in Sections~\ref{app:land-sea-mask}--\ref{app:weight-diagnostics}.

Figure~\ref{fig:ess-year} shows the effective sample size of the adjusted weights at each intervention year. The ESS declines in later intervention years, as the forced distribution moves farther from the control distribution, and is \essPresentJoint{} at $t^{*}=2025$; without precipitation adjustment it is \essPresentMarginal{} in that year.

\begin{figure}[tbp]
\centering
\includegraphics[width=\textwidth]{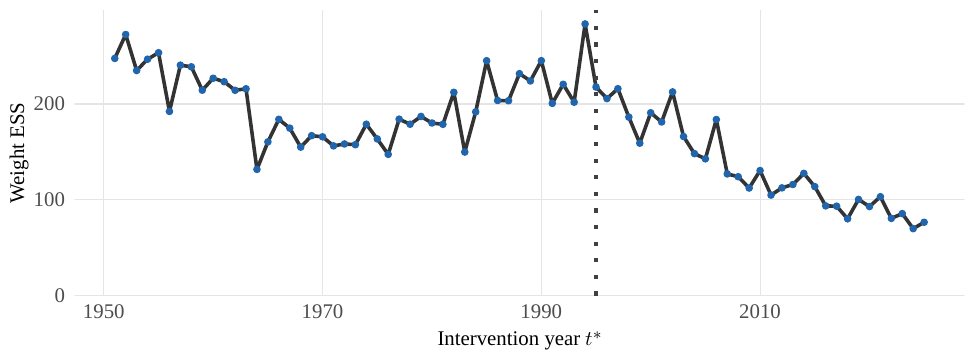}
\caption{Effective sample size of the adjusted weights, $1/\sum_t\alpha_t^2$, by intervention year $t^{*}$ at $\kappa=\kappaPresent$, where $\alpha_t=w_t^{(1;t^*)}(\kappa)\big/\{12\sum_{t'\in\mathcal{J}_m}w_{t'}^{(1;t^*)}(\kappa)\}$, for $t$ in the set $\mathcal{J}_m$ of transitions from calendar month $m$, are the normalized weights of the monthly averaged estimator of Section~\ref{sec:app-estimation}. The vertical dotted line marks the emergence year \emergenceYear{} of Figure~\ref{fig:intervention-year}.}
\label{fig:ess-year}
\end{figure}

Each historical monthly residual covariance $\boldsymbol\Sigma_{\varepsilon,1,m}$ is estimated from at least \sigmaHistNresid{} pooled residuals. Across calendar months, its condition number ranges from \sigmaHistCondMin{} to \sigmaHistCondMax{} before penalization and from \sigmaHistCondPenMin{} to \sigmaHistCondPenMax{} at the selected penalty, so the penalty of Section~\ref{sec:propensity} substantially reduces the condition number, consistent with its purpose of reducing the contribution of low-variance directions to the weights.

\subsection{Land and Ocean Masks}\label{app:land-sea-mask}

Gridcells are classified as land or ocean by the $1^\circ$ land--sea mask distributed with the NCAR Command Language \citep{ncl2019}, which is derived from RAND's global elevation and depth data \citep{rand_elevation}. The mask is assigned to the $2.5^\circ$ analysis grid by nearest neighbor, with lakes, small islands, and ice shelves counted as land. Land and ocean estimates are cosine-latitude-weighted means over the gridcells of each region, computed on the original record and on every bootstrap replicate with the same weights, so that the global estimate is the area-weighted combination of the land and ocean estimates. Land gridcells make up \landAreaPct\% of the weighted area and account for \landSharePresentPct\% of the global estimate at $t^{*}=2025$.

\subsection{Cropland Mask and Named Agricultural Regions}\label{app:cropland}

Cropland fractions are taken from the LUH2 land-use states for \cropYear{} \citep{hurtt2020harmonization}, the land-use forcing of the CMIP6 historical simulations, including those of GISS-E2-1-G. Cropland in each $0.25^\circ$ LUH2 gridcell is the sum of the fractions of its five crop states (C3 and C4 annual crops, C3 and C4 perennial crops, and C3 nitrogen-fixing crops), multiplied by the gridcell area. Each $2.5^\circ$ analysis gridcell contains exactly $10\times10$ LUH2 gridcells, and its cropland fraction is their total cropland area divided by their total area (Figure~\ref{fig:cropland-regions}, Panel~A). The cropland shares of Section~\ref{sec:app-adaptation} weight each gridcell by its area times its cropland fraction. Panel~B of Figure~\ref{fig:adjustment-comparison} displays sign-reversed gridcells with at least 2\% cropland, with opacity increasing with the cropland fraction and saturating at 60\%. Figure~\ref{fig:cropland-zooms} enlarges the four highlighted regions; in that figure, darker shading indicates a larger absolute difference between the adjusted and unadjusted estimates (hPa). The 2\% cropland threshold is used only for visualization and does not enter the regional estimates.

Table~\ref{tab:crop-classes} gives the share and area of global cropland in each of seven adjustment classes: for a positive (drying) unadjusted estimate, adjustment increases its magnitude (DI), decreases its magnitude (DD), or reverses its sign (DR); for a negative (moistening) unadjusted estimate, the corresponding classes are MI, MD, and MR, respectively. The seventh class comprises cropland on which both estimates are negligible (NG). In total, adjustment reverses the sign of the estimate over \adjClassCropRevPct\% of cropland, split roughly evenly between drying-to-moistening reversals (\adjClassCropPosRevMha{} million hectares, Mha) and moistening-to-drying reversals (\adjClassCropNegRevMha{}~Mha).

To summarize the regional effects of adjustment on cropland, Table~\ref{tab:crop-regions} reports the cropland of fifteen named agricultural regions by adjustment class, with gridcells assigned to a region by their centers; Figure~\ref{fig:cropland-regions} maps the regions and displays the same breakdown. The first four regions contain \cropWindowsRevPct\% of the sign-reversed cropland and are highlighted in Figure~\ref{fig:adjustment-comparison}. The remaining \cropOtherN{} are further major cropland regions, which together contain \cropOtherMha{}~Mha of cropland and \cropOtherRevMha{}~Mha of sign-reversed cropland. Among them, adjustment intensifies the drying estimate over most of the cropland of the Indo-Gangetic Plain and the Deccan Plateau, and decreases it over most of the cropland of the Sahel and the Brazilian Cerrado. In the other seven regions, both estimates lie within two bootstrap standard errors of zero over most of the cropland.

\begin{table}[tbp]
\centering
\caption{Global cropland by adjustment class at $t^{*}=2025$, as a percentage of global cropland and in Mha. Non-negligible gridcells are grouped by the sign of the unadjusted estimate; a gridcell is considered negligible when both estimates lie within two bootstrap standard errors of zero. Totals may differ from the sums of displayed components because of rounding.}
\label{tab:crop-classes}
\begin{tabular}{@{}llrr@{}}
\hline
Class & Code & Cropland (\%) & Cropland (Mha) \\
\hline
Unadjusted estimate positive (drying) &  & 30.9 & 488 \\
\quad Same sign, magnitude increased & DI & 11.3 & 179 \\
\quad Same sign, magnitude decreased & DD & 15.7 & 249 \\
\textbf{\quad Sign reversed} & \textbf{DR} & \textbf{3.8} & \textbf{60} \\
Unadjusted estimate negative (moistening) &  & 9.8 & 155 \\
\quad Same sign, magnitude increased & MI & 2.2 & 35 \\
\quad Same sign, magnitude decreased & MD & 3.5 & 55 \\
\textbf{\quad Sign reversed} & \textbf{MR} & \textbf{4.1} & \textbf{65} \\
Negligible & NG & 59.3 & 939 \\
\hline
All cropland &  & 100.0 & 1582 \\
\hline
\end{tabular}

\end{table}

\begin{table}[tbp]
\centering
\caption{Named cropland regions: bounds and cropland in \cropYear{} (Mha) by adjustment class at $t^{*}=2025$. The first four regions contain most of the sign-reversed cropland and are highlighted in Figure~\ref{fig:adjustment-comparison}; the regions below are additional major cropland regions, in decreasing order of cropland. Column codes are those of Table~\ref{tab:crop-classes}; the sign-reversed classes DR and MR are shown in bold. Totals may differ from the sums of displayed components because of rounding.}
\label{tab:crop-regions}
\small
\setlength{\tabcolsep}{3pt}
\begin{tabular}{@{}lllrrrrrrrr@{}}
\hline
 & & & \multicolumn{8}{c@{}}{Cropland (Mha)} \\
Region & Latitude & Longitude & Total & DI & DD & \textbf{DR} & MI & MD & \textbf{MR} & NG \\
\hline
Northern Great Plains & 44\textdegree{}N--56\textdegree{}N & 115\textdegree{}W--97\textdegree{}W & 50 & 0 & 0 & \textbf{0} & 0 & 23 & \textbf{22} & 4 \\
East African highlands & 10\textdegree{}S--10\textdegree{}N & 29\textdegree{}E--42\textdegree{}E & 46 & 8 & 1 & \textbf{0} & 0 & 0 & \textbf{30} & 7 \\
Argentine Pampas & 40\textdegree{}S--28\textdegree{}S & 65\textdegree{}W--55\textdegree{}W & 43 & 0 & 0 & \textbf{11} & 12 & 2 & \textbf{0} & 18 \\
Southeastern Australia & 38\textdegree{}S--26\textdegree{}S & 137\textdegree{}E--153\textdegree{}E & 28 & 0 & 8 & \textbf{14} & 0 & 0 & \textbf{0} & 7 \\
\hline
Indo-Gangetic Plain & 21\textdegree{}N--32\textdegree{}N & 70\textdegree{}E--90\textdegree{}E & 127 & 74 & 36 & \textbf{0} & 0 & 0 & \textbf{0} & 18 \\
Ukraine and southern Russia & 44\textdegree{}N--55\textdegree{}N & 25\textdegree{}E--50\textdegree{}E & 86 & 0 & 0 & \textbf{0} & 0 & 0 & \textbf{0} & 86 \\
US Corn Belt & 36\textdegree{}N--48\textdegree{}N & 97\textdegree{}W--80\textdegree{}W & 69 & 0 & 0 & \textbf{0} & 6 & 7 & \textbf{0} & 55 \\
Sahel & 10\textdegree{}N--16\textdegree{}N & 15\textdegree{}W--20\textdegree{}E & 68 & 2 & 53 & \textbf{0} & 0 & 0 & \textbf{0} & 13 \\
Western and central Europe & 43\textdegree{}N--55\textdegree{}N & 5\textdegree{}W--20\textdegree{}E & 62 & 0 & 0 & \textbf{3} & 0 & 1 & \textbf{0} & 58 \\
North China Plain & 32\textdegree{}N--42\textdegree{}N & 110\textdegree{}E--123\textdegree{}E & 42 & 0 & 0 & \textbf{0} & 1 & 0 & \textbf{0} & 40 \\
Deccan Plateau & 15\textdegree{}N--21\textdegree{}N & 73\textdegree{}E--82\textdegree{}E & 31 & 29 & 0 & \textbf{0} & 0 & 0 & \textbf{0} & 2 \\
Mainland Southeast Asia & 10\textdegree{}N--20\textdegree{}N & 98\textdegree{}E--108\textdegree{}E & 29 & 0 & 0 & \textbf{0} & 0 & 0 & \textbf{0} & 29 \\
Kazakh steppe & 48\textdegree{}N--56\textdegree{}N & 60\textdegree{}E--80\textdegree{}E & 24 & 0 & 0 & \textbf{0} & 0 & 2 & \textbf{2} & 20 \\
Brazilian Cerrado & 20\textdegree{}S--10\textdegree{}S & 55\textdegree{}W--43\textdegree{}W & 24 & 1 & 23 & \textbf{0} & 0 & 0 & \textbf{0} & 0 \\
Southern Great Plains & 30\textdegree{}N--40\textdegree{}N & 105\textdegree{}W--98\textdegree{}W & 18 & 0 & 0 & \textbf{0} & 0 & 0 & \textbf{0} & 18 \\
\hline
\end{tabular}

\end{table}

\begin{figure}[tbp]
\centering
\includegraphics[width=\textwidth]{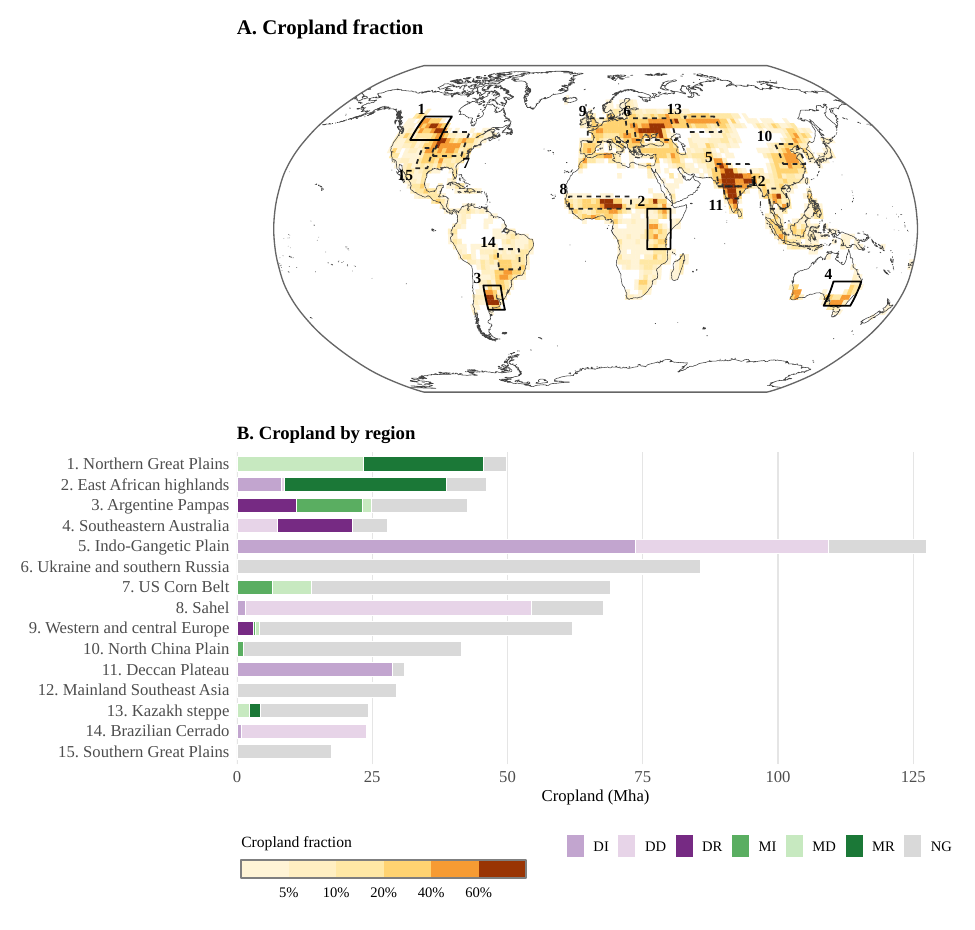}
\caption{Cropland mask and named regions. Panel~A shows the LUH2 cropland fraction of each analysis gridcell in \cropYear{}, with the fifteen named regions of Table~\ref{tab:crop-regions} numbered in table order; solid boxes outline the four regions containing most of the sign-reversed cropland and dashed boxes the further regions. Panel~B shows the cropland of each region by adjustment class at $t^{*}=2025$, using the codes of Table~\ref{tab:crop-classes}: purple shades represent drying and green shades represent moistening for the unadjusted estimate, with the sign-reversed classes darkest and the negligible class (NG) in gray.}
\label{fig:cropland-regions}
\end{figure}

\begin{figure}[tbp]
\centering
\includegraphics[width=\textwidth]{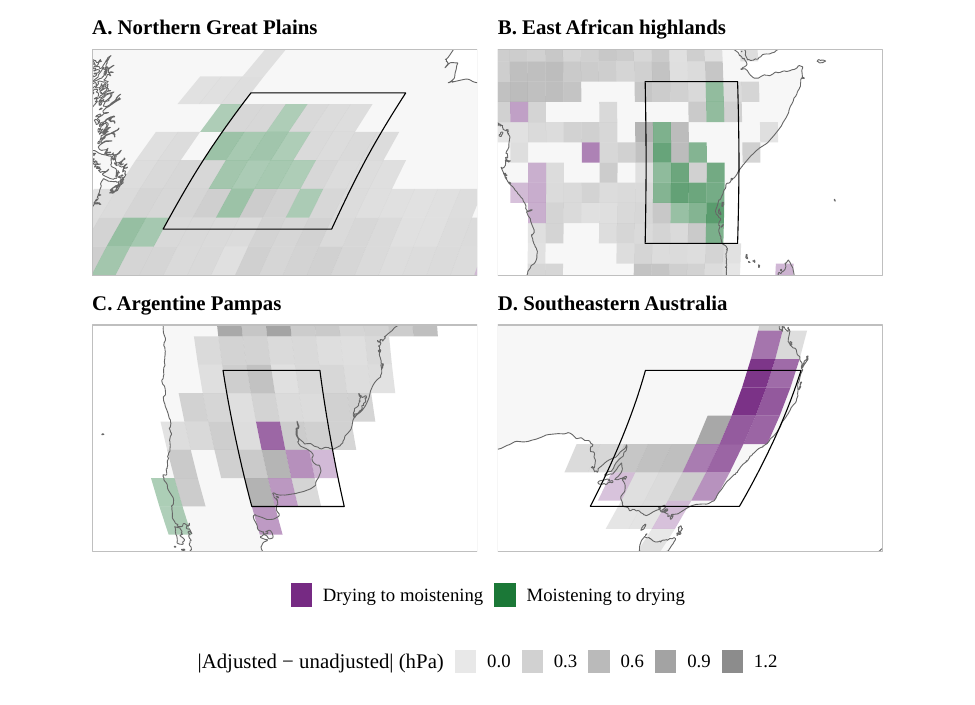}
\caption{Regional enlargements of the cropland sign reversals in Figure~\ref{fig:adjustment-comparison} at $t^{*}=2025$. Purple and green show reversals from drying to moistening and from moistening to drying, respectively; gray shows other gridcells with at least 2\% cropland. Darker shading indicates a larger absolute difference between the adjusted and unadjusted estimates (hPa), on a common scale across all four panels. Boxes denote the named regions of Table~\ref{tab:crop-regions}.}
\label{fig:cropland-zooms}
\end{figure}

\subsection{Representative Weighted Months}\label{app:weight-diagnostics}

Figure~\ref{fig:weight-draws} shows three piControl months evaluated under the regularized conditional densities at $\kappa=\kappaPresent$ and $t^{*}=2025$: a month with high density under the control distribution $f_0$, a month with high density under the forced distribution $f_1$, and a month with low density under both. The maps illustrate piControl fields at contrasting values of these conditional densities: the month with high density under the forced distribution has warm anomalies over most of the globe, strongest over Antarctica, whereas the month with low density under both distributions combines strong cold anomalies over the Arctic with warm anomalies over Antarctica.

\begin{figure}[tbp]
\centering
\includegraphics[width=\textwidth]{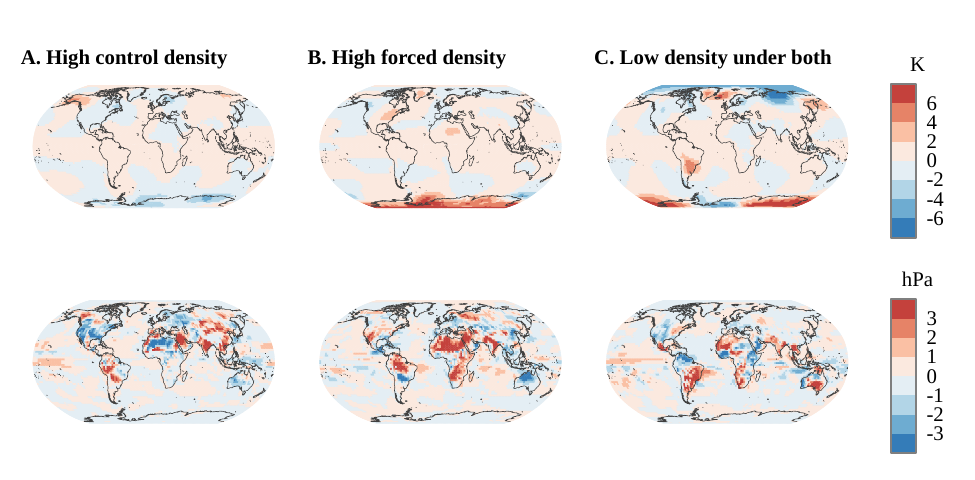}
\caption{Representative piControl months by density under the control and forced distributions. Columns: a month with high density under the control distribution, a month with high density under the forced distribution, and a month with low density under both (months are chosen as the fifth-most-extreme within each category so that a single outlier is not shown as representative). Top row: the near-surface air temperature anomaly reconstruction (K). Bottom row: the corresponding VPD anomaly (hPa).}
\label{fig:weight-draws}
\end{figure}

\clearpage

\end{document}